\newtheorem{theorem}{Theorem}[section]
\numberwithin{equation}{section}
\newtheorem{corollary}[theorem]{Corollary}
\newtheorem{lemma}[theorem]{Lemma}
\newtheorem{proposition}[theorem]{Proposition}
\newenvironment{proof}[1][Proof]{\textbf{#1. }}{\ \rule{0.5em}{0.5em}}%
\journal{}
\begin{document}

\begin{frontmatter}

\title{Approximation smooth and sparse functions
 by deep neural networks  without saturation  \tnoteref{t1}} \tnotetext[t1]{The research was
supported by the National Natural Science Foundation of China
(Grant Nos. 61806162, 11501496 and 11701443)}
\author{Xia Liu$^{1^*}$}\cortext[*]{Corresponding author: liuxia1232007@163.com}


\address{1. School of Sciences, Xi'an University of Technology,
Xi'an 710048, China}


\begin{abstract}
Constructing neural networks for function approximation is a classical and
longstanding topic in approximation theory.
In this paper, we aim at constructing deep neural networks (deep nets for short) with three hidden layers
to approximate smooth and sparse functions.
In particular, we prove that the constructed  deep nets can reach the optimal approximation rate in approximating
both smooth and sparse functions with controllable magnitude of free parameters.
Since the saturation that describes the bottleneck of approximate is an insurmountable problem of constructive neural networks,
we also prove that deepening the neural network with only one more hidden layer can avoid the saturation.
The obtained results underlie advantages of deep nets and provide
theoretical explanations for deep learning.

\end{abstract}

\begin{keyword}
Approximation theory, deep learning, deep neural networks,
localized approximation, sparse approximation.

\end{keyword}
\end{frontmatter}

\section{Introduction}

Machine learning \cite{Bishop2006} is a key sub-field of artificial intelligence (AI)
which abounds in sciences, engineering, medicine, computational finance
and so on. Neural network \cite{Maiorov2006,Mhaskar,Hagan1996,Lin2016} is an eternal topic
of machine learning that makes machine
learning no longer just like a machine to execute commands,
but makes machine learning have the ability to draw inferences about other cases
from one instance.
Deep learning \cite{Goodfellow2016,Hinton2006} is a new active area of machine learning research
based on deep structured learning model with appropriate algorithms, and is
acclaimed as a magical approach to deal with massive data. Indeed, neural networks with
more than one hidden layer are one of the most typical deep
structured models in deep learning \cite{Goodfellow2016}.
In current literature \cite{Lin2016,Pinkus1999}, it was showed that deep nets outperform shallow
neural networks (shallow nets for short) in the sense that deep nets break through some lower bounds for shallow nets.
Furthermore, some studies \cite{Eldan2015,Sanguineti2013,Lin2017,Mhaskar2016,Raghu2016,Telgarsky2016}
have demonstrated the superiority of deep nets via showing that
deep nets can approximate various functions while shallow nets fail with similar
number of neurons.

Constructing neural networks to approximate continuous functions is a classical and prevalent topic in
approximation theory.
In 1996, Mhaskar \cite{Mhaskar1996} proved that neural networks with single hidden
layer are capable of providing an optimal order of approximating smooth functions.
The problem is, however, that the weights and biases of the constructrs shallow nets are huge,
 which usually leads to extremely large capacity \cite{guo2019}.
 Besides this partly positive approximation results, it was shown in \cite{Chui1996,Lin2017}
 that there is a bottleneck for shallow nets in approximating smooth functions in the
 sense that there is some lower bound for approximation.
Moreover,  Chui et al. \cite{Chui1994} showed
 that shallow nets with an ideal sigmoidal activation
function cannot provide localized approximation in Euclidean space.
Furthermore, it was proved in \cite{Chui2019} that shallow nets cannot capture the
rotation-invariance property by showing the same approximate rates in approximating
rotation-invariant function and general smooth function.
All these results presented limitations of
shallow nets from the approximation theory view point.

To overcome these limitations of shallow nets, Chui et al. \cite{Chui1994} demonstrated that deep nets with two
hidden layers can  provide localized approximation.
Further than that, Chui et al. \cite{Chui2019} showed that deep nets with two hidden layers
and controllable norms of weights
can approximate the univariate smooth functions without saturation and adding depth can realize the rotation-invariance.
Here, saturation \cite{Lin2016} means that the approximation rate cannot be improved once the
smoothness of functions achieves a certain level, which was
proposed as an open question by Chen \cite{chen1993}.
The general results by Lin \cite{Lin2019} indicated that deep nets with two hidden layers and
controllable weights possess both localized and sparse approximation properties in the spatial domain.
They also proved that learning strategies based on deep nets can learn more functions with almost
optimal learning rates than those based on shallow nets.
The problem in \cite{Lin2019} is that the saturation
 cannot be overcome.
The above theoretical verifications demonstrate that deep nets with two hidden layers can
really overcome some deficiency of shallow nets, but that is just partially.

Recent literature in deep nets \cite{Yarotsky2017,Han2019} proved that deep nets with
ReLU activation function (denoting deep ReLU nets) are more efficiently
in approximating smooth function and possess better generalization performance for numerous learning
tasks than shallow nets. Nevertheless, the constructed deep ReLU nets are too deep, which
results in several difficulty in training, including the gradient vanishing phenomenon and disvergence issue \cite{Goodfellow2016}.
Furthermore, how to select the depth is still an open problem, and there is a common phenomenon
that deep nets with huge hidden layers will lead to inoperable \cite{Hinton2006}.
Under this circumstance, we hope to construct a deep net with good approximation capability,
controllable parameters, non-saturation and not too deep.
To this end, we construct in this paper a deep net with three hidden layers that possesses the following
properties: localized approximation, optimal approximation rate,
controllable parameters, non-saturation and spatial sparsity.
Our main tool for analysis is the localized approximation \cite{Chui1996,Lin2019},
``product gate'' strategy \cite{Chui2019,Petersen2018,Yarotsky2017} and localized Taylor polynomials \cite{Han2019,Petersen2018}.


\section{Main results}

Let $I=[0,1]$, $d\in \mathbb{N}$, $x\in X:=I^d$,
$C(\mathbb{R}^d)$ be the space of continuous functions with the norm
$$
\|f\|_{\infty}:=\|f\|_{C(\mathbb{R}^d)}:=\max_{x\in {\mathbb{R}^d}}|f(x)|.
$$
For $x\in X$, the set of shallow nets can be mathematically expressed as
\begin{equation}\label{shallow}
F_{\sigma,n}(x)=\left\{\sum_{i=1}^{n_0}c_i\sigma(w_i\cdot x+b_i): w_i\in\mathbb{R}^d, b_i,c_i\in \mathbb{R}\right\}
\end{equation}
where $\sigma: \mathbb{R}\rightarrow \mathbb{R}$ is an activation function, $n_0$ is the number of hidden neurons (nodes),
$c_i\in \mathbb{R}$ is the outer weights, $w_i:=(w_{ji})_{j=1}^d \in \mathbb{R}^d$
 is the inner weight, and $b_i$ is the bias (threshold) of the $i$-$th$ hidden nodes.

Let $l\in \mathbb{N}$,  $d_0=d, d_1, \cdots, d_l\in \mathbb{N}$,
$\sigma_k: \mathbb{R}\rightarrow \mathbb{R}~(k=1,2,\cdots,l)$ be univariate nonlinear functions.
For $\vec{h}=(h^{(1)},\cdots,h^{(d_k)})^T\in \mathbb{R}^{d_k}$, define
$\vec{\sigma}(\vec{h})=(\vec{\sigma}(h^{(1)}),\cdots,\vec{\sigma}(h^{(d_k)}))^T$.
Denote $\mathcal{H}_{\{\sigma_j, l, \tilde{n}\}}$ as the set of deep nets with $l$ hidden layers and
$\tilde{n}$ free parameters that can be mathematically represented by
\begin{equation}\label{DL-L}
h_{\{\sigma_j, l, \tilde{n}\}}(x)=\vec{a}\cdot\vec{h}_l(x)
\end{equation}
where
$$
\vec{h}_k(x)=\vec{\sigma}_k(W_k\cdot \vec{h}_{k-1}(x)+\vec{b}_k),~k=1,2,\cdots,l,
$$
$h_0(x)=x, \vec{a}\in \mathbb{R}^{d_l}, \vec{b}_k\in \mathbb{R}^{d_k}$,
$W_k:=(W_{i,j}^k)_{d_k\times d_{k-1}}$ is a $d_k\times d_{k-1}$ matrix, and $\tilde{n}$
denotes the number of free parameters, i.e., $\tilde{n}=\sum_{k=1}^l(d_k\cdot d_{k-1}+d_k)+d_l$.
The structure of deep nets, depicted in Figure 1, depends mainly on the structures of the
weight matrices $W_k$ and the parameter vectors $\vec{b}_k$ and $\vec{a}$, $k=1,2,\cdots,l$.
It is easy to see that when $l=1$, the function defined by (\ref{DL-L}) is a shallow net.

%

\begin{figure}[h]
\centering
\includegraphics[height=8cm,width=12.0cm]{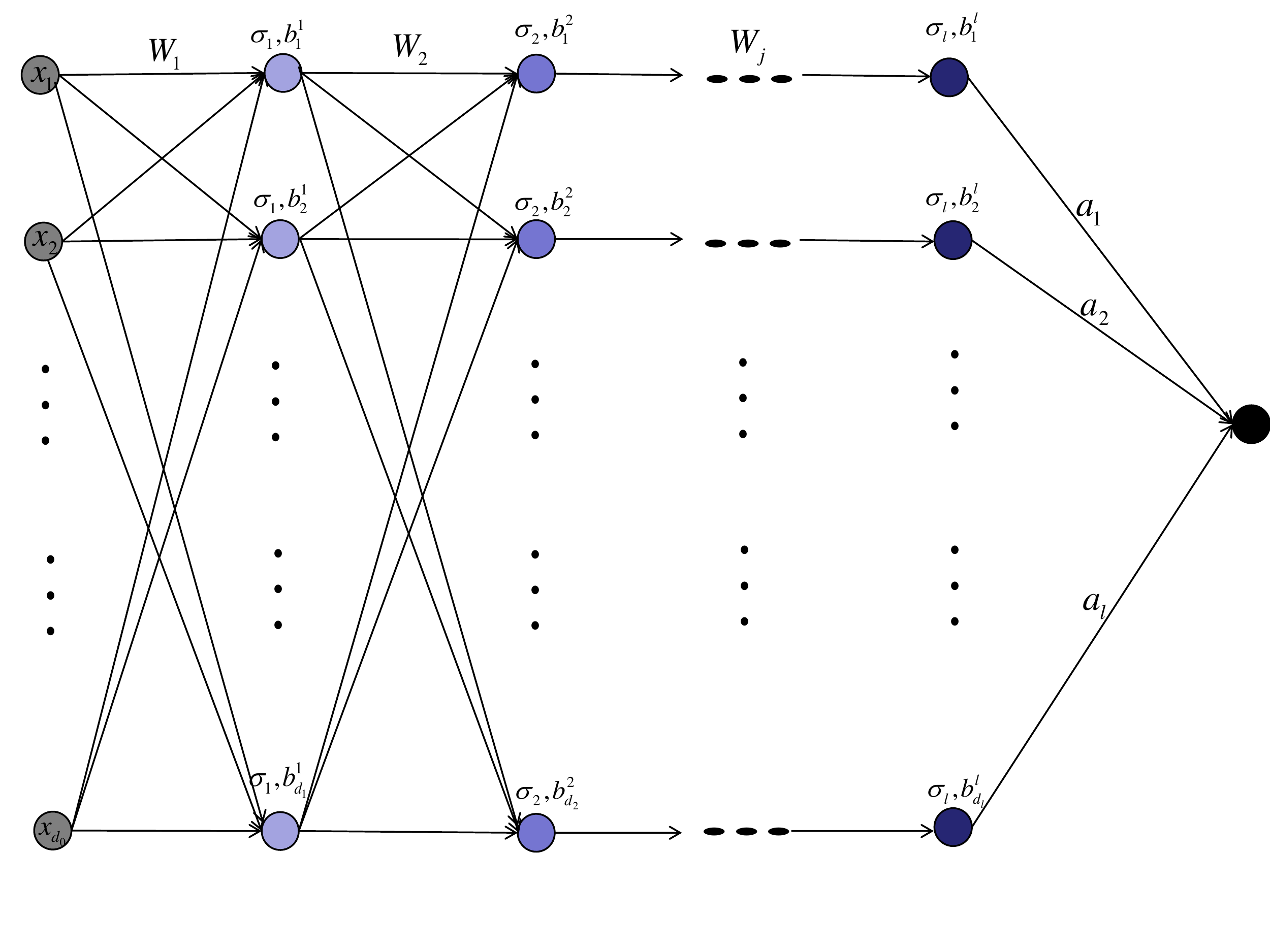}
\caption{Structure for deep neural networks }
\end{figure}

\subsection{Approximation of smooth function by Deep Nets}

In this part, we focus on approximating smooth functions by deep nets.
The smooth property is a widely used priori-assumption
in approximation and learning theory \cite{Chui2018,Cucker2007,Gyorfi2002,Lin2018,Yarotsky2017}.
Let $c_0$ be a positive constant, $r=k+v$ with $k\in \mathbb{N}_0:=\{0\}\bigcup \mathbb{N}$ and $0<v\leq 1$.
A function $f: X\rightarrow \mathbb{R}$ is said to be $(r,c_0)$-smooth if $f$ is $k$-times differentiable and
for any $\alpha_j\in \mathbb{N}_0, j=1,\cdots,d$ with $\alpha_1+\cdots+\alpha_d=k$, then for any $x, z\in X$,
the partial derivatives $\partial^kf/{\partial x_1^{\alpha_1}\cdots \partial x_d^{\alpha_d}}$ exist and satisfy
\begin{equation}\label{smooth}
\left|\frac{\partial^kf}{\partial x_1^{\alpha_1}\cdots \partial x_d^{\alpha_d}}(x)-\frac{\partial^kf}{\partial x_1^{\alpha_1}\cdots \partial x_d^{\alpha_d}}(z)\right|\leq c_0\|x-z\|^v.
\end{equation}
Throughtout this paper, $\|x\|$ denotes the Euclidean norm of $x$.
In particular, if $0<r\leq 1$, then  (\ref{smooth}) coincides the well known Lipschitz condition:
\begin{equation}\label{lip}
|f(x)-f(z)|\leq c_0\|x-z\|^r, \forall x, z\in X.
\end{equation}
Denote by $Lip(r,c_0)$
be the family of $(r,c_0)$-Lipschitz functions satisfying (\ref{lip}).
In fact, the Lipschitz property depicts the smooth information of $f$ and
has been adopted in huge literature \cite{Pinkus1999,Chui1994,Chui2019,Lin2019, Lin2014}
to quantify the approximation ability of neural networks.

As we know, different activation functions used in neural networks will lead to different
results \cite{Pinkus1999}. Among all the activation functions, the sigmoidal function and Heaviside function are two commonly used ones.
Similar as \cite{Lin2019}, we use these two activation functions to construct deep nets.
The main reason is that the usage of Heaviside function can enhance the localized approximation performance \cite{Lin2019}
and the adoption of sigmoidal function can improve the capability to approximate algebraic polynomials \cite{Chui2019}.
Let $\sigma_0$ be the Heaviside function, i.e.,
$$
\sigma_0(t)=1,\mbox{if~} t\geq 0;~\sigma_0(t)=0, \mbox{if~} t<0,
$$
and
$\sigma: \mathbb{R}\rightarrow \mathbb{R}$ be a sigmoidal function, i.e.,
\begin{equation}\label{sigmoid}
\lim_{t\rightarrow +\infty}\sigma(t)=1,\lim_{t\rightarrow -\infty}\sigma(t)=0.
\end{equation}
Due to (\ref{sigmoid}), for any $\varepsilon>0$, there exists a $K_\varepsilon=K(\varepsilon, \sigma)>0$ such that
\begin{eqnarray}\label{k}
\left\{\begin{array}{cc}
|\sigma(t)-1|<\varepsilon,\mbox{~if~}t\geq K_{\varepsilon},\\
|\sigma(t)|<\varepsilon,\mbox{~~if~}t\leq -K_{\varepsilon}.
\end{array}\right.
\end{eqnarray}

Before presenting the main results, we should introduce some assumptions.
Assumption 1 is the $r$-Lipschitz continuous condition for the target function, which is a
standard condition in approximation and learning theory.

{\bf{Assumption 1}}
We assume $g\in \mbox{Lip}(r,c_0)$ with $r=k+v$, $k\in \mathbb{N}_0$, $0<v\leq1$, $c_0>0$.

Assumption 2 concerns the smoothness condition on activation function $\sigma$, which has
already been adopted in \cite{Lin2014almost}.

{\bf{Assumption 2}}
For $r>0$ with $r=k+v$, $k\in \mathbb{N}_0$, $0<v\leq 1$,
let $\sigma$ be a non-decreasing sigmoidal function with
$\|\sigma'\|_{L_\infty(R)}\leq 1$, $\|\sigma\|_{L_\infty(R)}\leq 1$
and there exists at least a point $b_0\in \mathbb{R}^d$ satisfies
$\sigma^{(j)}(b_0)\neq 0$ for all $j=0,1,2,\cdots,k_0$,
and $k_0\geq\max\{k,2\}+1$.

There are many functions satisfy the above restrictions
such as: the Logistic function $\sigma(t)=\frac{1}{1+e^{-t}}$, the Hyperbolic tangent function
$\sigma(t)=\frac{1}{2}(\tanh(t)+1)$, the Gompertz function $\sigma(t)=e^{-ae^{-bt}}$ with $a, b>0$ and the Gaussian function
$\sigma(t)=e^{-t^2}$.

Our first main result is the following theorem, in which we construct a deep net with three
hidden layers to approximate  smooth functions.
Denote by $\mathcal{H}_{3,\tilde{n}}:=\mathcal{H}_{\{\sigma_0,\sigma,\sigma,3,\tilde{n}\}}$ be the set of
deep net with three hidden layers and $\tilde{n}$ free parameters, where $\sigma_0, \sigma, \sigma$ are the activation functions
in the first, second and third hidden layers, respectively.

\begin{theorem}\label{theorem1}
Let $0<\varepsilon\leq1$, under Assumptions 1 and 2,
there exists a deep net $H(x)\in \mathcal{H}_{\{3,\tilde{n}\}}$
such that
\begin{equation}
|g(x)-H(x)|\leq C(\tilde{n}^{-\frac{r}{d}}+\tilde{n}\varepsilon),
\end{equation}
where all  parameters of this deep net are bounded by
$\mbox{poly}(\tilde{n},\frac{1}{\varepsilon})$, $\mbox{poly}(\tilde{n},\frac{1}{\varepsilon})$
denotes some polynomial function with respect to $\tilde{n}$ and $\frac{1}{\varepsilon}$, and $C$ is a constant independent of $\tilde{n}$ and $\varepsilon$.
\end{theorem}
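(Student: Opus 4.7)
The plan is to combine three classical ingredients: a cube-partition localization realized by the first (Heaviside) layer, Taylor polynomial approximation on each cube, and a ``product gate'' built from the sigmoidal layers that multiplies each localized indicator by the corresponding Taylor polynomial. The depth budget is tight: one layer for linear-threshold localization, one for sharpening and for generating polynomial atoms, and one to realize products, so each layer must serve a specific purpose.

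First I would fix an integer $N$ with $N^d \asymp \tilde n$ and partition $X=[0,1]^d$ into disjoint half-open cubes $\{\Delta_j\}_{j=1}^{N^d}$ of side $1/N$ with centers $\xi_j$. For each $\Delta_j$ let $P_j$ be the Taylor polynomial of $g$ of degree $k$ at $\xi_j$. Assumption 1 gives the standard local estimate $|g(x)-P_j(x)|\le c_1 N^{-r}$ for $x\in\Delta_j$, so if I can implement $\sum_j P_j(x)\chi_{\Delta_j}(x)$ exactly by a net in $\mathcal H_{3,\tilde n}$, the approximation error would already be $O(N^{-r})=O(\tilde n^{-r/d})$. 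The task is therefore to realize the localized polynomials approximately.

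The first hidden layer (Heaviside) produces for each coordinate $i$ and each partition endpoint the univariate indicators $\sigma_0(x_i-t)$; summing appropriately yields $\chi^{(i)}_j(x_i):=\sigma_0(x_i-a_{j,i})-\sigma_0(x_i-b_{j,i})\in\{0,1\}$, the 1D indicator of the $i$-th side of $\Delta_j$. In the second hidden layer I would use the sigmoidal $\sigma$ as a sharpened threshold: for the sum $S_j(x):=\sum_{i=1}^d\chi^{(i)}_j(x_i)$, the quantity $\sigma(K(S_j(x)-d+\tfrac12))$ with large $K=K_\varepsilon$ is within $\varepsilon$ of $\chi_{\Delta_j}(x)$ by (\ref{k}). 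In parallel, I would devote additional neurons of the second layer to generating ``polynomial atoms'' by exploiting Assumption 2: scaled finite differences
\[
\Delta_h^{(s)}\sigma(b_0)=\sum_{\ell=0}^{s}\binom{s}{\ell}(-1)^{s-\ell}\sigma(b_0+\ell h t)/h^s
\]
let sigmoidal-layer combinations approximate any monomial $t^s$ for $s\le k_0$ arbitrarily well, a device already used in \cite{Chui2019,Lin2014almost}. This gives a sigmoidal representation of each $P_j(x)$ up to error $\varepsilon$ and with controlled coefficients.

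In the third hidden layer I would implement the ``product gate'': using the same divided-difference trick on $\sigma$ at $b_0$, the map $(u,v)\mapsto uv$ is approximable to accuracy $\varepsilon$ by a fixed-size sigmoidal net whose output is linear in the two input neurons \cite{Chui2019,Petersen2018,Yarotsky2017}. Applying it to $u=$ sharpened indicator of $\Delta_j$ and $v=$ sigmoidal approximation of $P_j(x)$ gives a single output neuron that equals $P_j(x)\chi_{\Delta_j}(x)$ up to error $O(\varepsilon)$, with coefficients still polynomial in $\tilde n$ and $1/\varepsilon$. Summing over $j$ in the output layer and applying the triangle inequality yields the target bound $C(\tilde n^{-r/d}+\tilde n\varepsilon)$, where the first term is the Taylor remainder and the second collects the $O(\tilde n)$ local sigmoidal errors.

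The main obstacle is making the product-gate and polynomial-generation steps fit inside only two sigmoidal layers while keeping every weight bounded by $\mathrm{poly}(\tilde n,1/\varepsilon)$. The sharpening parameter $K_\varepsilon$ in the threshold, the divided-difference step size $h$ (which appears as $1/h^s$), and the Taylor coefficients of $g$ must all be tracked; the key point is that, since $k_0$ is a fixed integer and the local cube has diameter $O(N^{-1})$, these quantities are bounded by fixed polynomials in $N$ and $1/\varepsilon$. The depth economy forces me to compute the two factors of each product in the same (second) layer rather than on separate branches, which is why Assumption 2 with $k_0\ge\max\{k,2\}+1$ is essential: it provides enough nonvanishing derivatives of $\sigma$ at $b_0$ to generate both the threshold sharpening and the monomial basis from the same activation.
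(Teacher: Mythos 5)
Your proposal follows essentially the same route as the paper's proof: Heaviside cube indicators sharpened by a sigmoid in the second layer (the net $N^*_{n,j,K_\varepsilon}$ of Lemma~\ref{3.4}), local Taylor polynomials realized by a one-hidden-layer sigmoidal net with weights polynomial in $1/\varepsilon$ (Propositions~\ref{3.1}--\ref{3.2} and Corollary~\ref{corollary3.2}), a fixed-size sigmoidal product gate in the third layer (Corollary~\ref{corollary3.3}) multiplying the two branches, and a sum over the $n^d$ cubes giving the $\tilde n^{-r/d}+\tilde n\varepsilon$ split via Lemma~\ref{3.5}. The only cosmetic difference is that you generate the monomial atoms by divided differences of $\sigma$ at $b_0$ whereas the paper inverts the Taylor expansion of $\sigma(\delta_m t)$; these are the same device with the same parameter bounds.
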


The proof of Theorem \ref{theorem1} will be postponed in Section 4, and a
direct consequences of Theorem \ref{theorem1} is as follows.

\begin{corollary}\label{corollary-1}
Under Assumptions 1 and 2, if $\varepsilon=\tilde{n}^{-\frac{r+d}{d}}$, then there holds
\begin{equation}
|g(x)-H(x)|\leq \bar{C}\tilde{n}^{-\frac{r}{d}},
\end{equation}
where $\bar{C}$ is a constant independent of $\tilde{n}$, and
all the parameters of the deep net are bounded by $\mbox{poly}(\tilde{n})$.
\end{corollary}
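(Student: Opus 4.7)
The plan is to obtain Corollary \ref{corollary-1} as a direct substitution into Theorem \ref{theorem1}, since the corollary is simply the optimized balance of the two error terms appearing in the theorem. First I would invoke Theorem \ref{theorem1} to guarantee the existence of $H(x)\in\mathcal{H}_{\{3,\tilde{n}\}}$ with
$$|g(x)-H(x)|\leq C(\tilde{n}^{-\frac{r}{d}}+\tilde{n}\varepsilon),$$
where the parameters are bounded by $\mbox{poly}(\tilde{n},\frac{1}{\varepsilon})$. The only freedom left is the choice of $\varepsilon$, and the natural choice is the one that equates the two contributions on the right-hand side.

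The key algebraic step is then to plug in $\varepsilon=\tilde{n}^{-\frac{r+d}{d}}$ and compute
$$\tilde{n}\varepsilon=\tilde{n}\cdot\tilde{n}^{-\frac{r+d}{d}}=\tilde{n}^{\,1-\frac{r+d}{d}}=\tilde{n}^{-\frac{r}{d}}.$$
Substituting back yields $|g(x)-H(x)|\leq C(\tilde{n}^{-\frac{r}{d}}+\tilde{n}^{-\frac{r}{d}})=2C\,\tilde{n}^{-\frac{r}{d}}$, so setting $\bar{C}:=2C$ gives the desired rate. Because $\bar{C}$ inherits independence of $\tilde{n}$ and $\varepsilon$ from the constant $C$ of Theorem \ref{theorem1}, this establishes the stated bound.

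It remains to verify the claim on the magnitudes of the free parameters. Since $\frac{1}{\varepsilon}=\tilde{n}^{\frac{r+d}{d}}$ is itself a polynomial in $\tilde{n}$ (with $r$ and $d$ fixed constants), any polynomial expression $\mbox{poly}(\tilde{n},\frac{1}{\varepsilon})$ becomes a polynomial in $\tilde{n}$ alone; composing the polynomial bound from Theorem \ref{theorem1} with $\frac{1}{\varepsilon}=\tilde{n}^{\frac{r+d}{d}}$ therefore produces a bound of the form $\mbox{poly}(\tilde{n})$, as required.

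There is no real obstacle here: the corollary is genuinely a one-line consequence of Theorem \ref{theorem1} once one notices that $\varepsilon=\tilde{n}^{-\frac{r+d}{d}}$ is precisely the value that balances $\tilde{n}^{-r/d}$ against $\tilde{n}\varepsilon$. All the analytic work — localized approximation, the product-gate construction, and the localized Taylor polynomial approximation used to control the additive $\tilde{n}\varepsilon$ term — is absorbed into the statement of Theorem \ref{theorem1}.
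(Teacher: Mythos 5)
Your proposal is correct and matches the paper's own (one-line) proof: Corollary \ref{corollary-1} is obtained by substituting $\varepsilon=\tilde{n}^{-\frac{r+d}{d}}$ into Theorem \ref{theorem1}, noting $\tilde{n}\varepsilon=\tilde{n}^{-\frac{r}{d}}$ and that $\frac{1}{\varepsilon}$ is then polynomially bounded in $\tilde{n}$. Your write-up is actually more explicit than the paper's about the constant $\bar{C}=2C$ and the parameter-magnitude claim.
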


%

The approximation rate of shallow nets and
deep nets with two hidden layers
are $\mathcal{O}(\tilde{n}^{-\frac{r}{d}})$ \cite{Maiorov2006,Chui2019},
which is the same as Corollary \ref{corollary-1}.
However, as far as the norm of weights is considered, all the weights in Corollary \ref{corollary-1}
are controllable, and are much less than those of shallow nets.
Specifically, for shallow nets, the norm of weights is at least exponential with
respect to $\tilde{n}$ \cite{Mhaskar1996}, while for deep nets in Corollary \ref{corollary-1}, the norm of
weights is only polynomial respect to $\tilde{n}$.
Such a difference is essentially according to the capacity estimate \cite{guo2019},
where a rigorous proof was presented that the covering number of deep nets with controllable norms of free parameters
can be tightly bounded.  Furthermore, compared with similar results for deep nets with two hidden layers \cite{Lin2019},
we find that our constructed deep net avoids the saturation.
To sum up, the constructed deep net with three hidden layers performs better than
shallow nets and deep nets with two hidden layers in overcoming their shortcomings.

\subsection{Sparse Approximation for Deep Nets}

Sparseness in the spatial domain is a prevalent data feature that abounds in numerous applications such as
magnetic resonance imaging (MRI) analysis \cite{Akkus2017}, handwritten digit recognition \cite{Cire2010} and so on.
The spatial sparseness means that the response (or function) of some actions
only happens on several small regions instead of the whole
input space. In other words, the response vanishes in most of regions of the input space.
Mathematically, the spatially sparse function is defined  as follows \cite{Lin2019}.

Let $s, N\in \mathbb{N}$, $s\leq N^d$,
$N_N^d=\{1,2,...,N\}^d$. Denote by $\{B_{N,\jmath}\}_{\jmath\in N_N^d}$ a cubic
partition of $I^d$ with centers $\{\zeta_\jmath\}_{\jmath\in N_N^d}$
and side length $\frac{1}{N}$. Define
$$
\Lambda_s:=\{\mathbf{k}_{\ell}:\mathbf{k}_{\ell}\in N_N^d,1\leq {\ell} \leq s\}
$$
and
$$
S:=\bigcup_{\jmath\in \Lambda_s}B_{N,\jmath}.
$$

For any function $f$ defined on $I^d$, if the support of $f$ is $S$, then we say that $f$ is $s$-sparse in
$N^d$ partitions.
We use $Lip(N,s,r,c_0)$ to quantify both the smoothness property and sparseness, i.e.,
$$
Lip(N,s,r,c_0)=\left\{f:f\in Lip(r,c_0) ~\mbox{and}~f~\mbox{is s-sparse in}~N^d~\mbox{partition}\right\}.
$$

For $n \in \mathbb{N}$ with $n\geq\hat{c}N$ for some $\hat{c}>0$, let
$\{A_{n,j}\}_{j\in \mathbb{N}_n^d}$ be another cubic partition of $I^d$
with centers $\{\xi_j\}_{j\in \mathbb{N}_n^d}$ and side length $\frac{1}{n}$.
For each $\jmath \in \mathbb{N}_N^d$, define
$$
\bar{\Lambda}_{\jmath}:=\{j\in N_n^d: A_{n,j}\cap B_{N,\jmath}\neq\emptyset\},
$$
it is easy to see that the set $\bigcup_{\jmath\in \bigwedge_s} \bar{\Lambda}_{\jmath}$
is the family of $A_{n,j}$ where $f$ is not vanished.

With these helps, we present a spareness assumption of $f$ as follows.

{\bf{Assumption 3}}
We assume $f\in \mbox{Lip}(N,s,r,c_0)$ with $r=k+v$, $k\in \mathbb{N}_0$,
$0<v\leq1$, $c_0>0$, $N,s \in \mathbb{N}$.

In \cite{Chui2019}, Chui et al. only discussed the approximating performance of deep nets with two hidden layers
in approximating smooth function.
Lin \cite{Lin2019} extended the results in \cite{Chui2019}
to approximate spatially sparse functions.
Specifically, Lin \cite{Lin2019} proved that deep nets with two hidden layers can
approximate spatially sparse function much better than shallow nets. However, their
results suffered from the saturation.
In this subsection, we aim at conquering the above deficiency
by constructing a deep net with three hidden layers.
Theorem \ref{theorem2} below is the second main result of this paper,
and the proof also be verified in Section 4.

\begin{theorem}\label{theorem2}
Let $0<\varepsilon\leq1$, $\tilde{n}\geq \tilde{c}N^d$ for some $\tilde{c}>0$. Under Assumptions 2 and 3,
there exists a deep net $H(x)\in \mathcal{H}_{\{3,\tilde{n}\}}$
such that
\begin{equation*}
|f(x)-H(x)|\leq c_0\tilde{n}^{-\frac{r}{d}}+\tilde{C}\tilde{n}\varepsilon,~\forall x\in X.
\end{equation*}
If $ x\in {I^d} \setminus \bigcup_{\jmath\in \Lambda_s} \bar{\Lambda}_\jmath$, then
\begin{equation*}
|H(x)|\leq \tilde{C}\tilde{n}\varepsilon
\end{equation*}
where all the parameters of the deep net are bounded by
$\mbox{poly}(\tilde{n},\frac{1}{\varepsilon})$, and
$\tilde{C}$ is a constant independent of $\tilde{n}$ and $\varepsilon$.
\end{theorem}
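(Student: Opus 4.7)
The plan is to mimic the three-layer construction behind Theorem \ref{theorem1}, but to install indicator/polynomial pairs only on the fine cubes that actually meet the support of $f$, so that the sparsity parameter $s$ enters the parameter count through the number of active subnetworks rather than through the smoothness error itself. Concretely, I would choose the fine partition $\{A_{n,j}\}$ so that the total number of subcubes intersecting $S$ satisfies $|\bigcup_{\jmath\in\Lambda_s}\bar\Lambda_\jmath|\asymp \tilde{n}$, which under the standing condition $\tilde{n}\geq \tilde{c}N^d$ corresponds to a side length $1/n$ with $n\asymp N(\tilde{n}/s)^{1/d}$; this is the right scaling to ultimately produce the advertised $\tilde{n}^{-r/d}$ rate when matched against the Taylor error $O(n^{-r})$.

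First, I would use the first hidden layer, whose activation is the Heaviside function $\sigma_0$, to build an \emph{exact} characteristic function $\chi_{A_{n,j}}$ for every $j\in\bigcup_{\jmath\in\Lambda_s}\bar\Lambda_\jmath$. The construction is the standard one from \cite{Chui1994,Lin2019}: each cube indicator is a $\sigma_0$-combination of $2d$ shifted thresholds, so the first-layer cost is $O(d\,\tilde{n})$ neurons with integer/polynomial weights. Cubes disjoint from $S$ get no indicator at all; this is the only place the sparsity is exploited at the network level, and it is also what will force $|H(x)|\leq \tilde{C}\tilde{n}\varepsilon$ off the support neighborhood.

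Next, I would reuse verbatim the second and third layers constructed in the proof of Theorem \ref{theorem1}. Under Assumption 2, Taylor-type combinations of $\sigma$ at the distinguished point $b_0$ realize each monomial $(x-\xi_j)^\alpha$ for $|\alpha|\leq k$ with uniform error $O(\varepsilon)$ and weights of order $\mathrm{poly}(1/\varepsilon)$; this is the mechanism already used in \cite{Chui2019} and adopted in Theorem \ref{theorem1}. The third layer then applies a sigmoidal \emph{product gate} $\sigma$-approximation of $(a,b)\mapsto ab$, valid when $a\in[0,1]$, to multiply each indicator by its local Taylor polynomial $T_j(x)=\sum_{|\alpha|\leq k}\frac{1}{\alpha!}\partial^\alpha f(\xi_j)(x-\xi_j)^\alpha$. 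Summing with the outer weights $\vec a$ yields $H(x)=\sum_j \chi_{A_{n,j}}(x)\,T_j(x)$ up to additive error $O(\varepsilon)$ per active gate. On any $A_{n,j}$ with $j\in\bar\Lambda_\jmath$ Taylor's theorem together with Assumption 3 gives $|f(x)-T_j(x)|\leq c_0 n^{-r}=c_0(\tilde{n}/s)^{-r/d}N^{-r}$, which, after absorbing $s$ and $N$ into the implicit constant using $\tilde{n}\gtrsim N^d$, produces the claimed $c_0\tilde{n}^{-r/d}$ term; simultaneously the $O(\tilde{n})$ product gates each contribute at most $\tilde{C}\varepsilon$, so their total leakage is bounded by $\tilde{C}\tilde{n}\varepsilon$. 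For $x\in I^d\setminus\bigcup_{\jmath\in\Lambda_s}\bar\Lambda_\jmath$ every indicator is zero in Step 1, hence the only output comes from the sigmoidal tails of the $\tilde{n}$ product gates, giving $|H(x)|\leq\tilde{C}\tilde{n}\varepsilon$ directly.

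The step I expect to be delicate is the treatment of \emph{boundary} fine cubes in $\bar\Lambda_\jmath$ that only partially intersect $S$, where $f$ can be zero on part of $A_{n,j}$ while $T_j$ is built from derivatives evaluated at $\xi_j\in A_{n,j}$ that need not lie in $S$. I would handle this by noting that $f\in\mathrm{Lip}(r,c_0)$ holds on the whole of $I^d$, so $T_j$ still satisfies $|f(x)-T_j(x)|\leq c_0 n^{-r}$ pointwise on $A_{n,j}$ regardless of whether $x\in S$; the only additional care is in the product-gate design, which must be insensitive to the sign of $T_j$ so that the $\varepsilon$-leakage bound survives without an extra factor of $\|T_j\|_\infty$. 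The parameter-size bookkeeping — checking that the Taylor coefficients $\partial^\alpha f(\xi_j)/\alpha!$ and the inner weights from Assumption 2 remain $\mathrm{poly}(\tilde{n},1/\varepsilon)$, and that the total count sums to the declared $\tilde{n}$ — is then a routine repetition of the count carried out for Theorem \ref{theorem1}.
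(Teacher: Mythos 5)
Your proposal is sound and reaches the stated bounds, but it is organized differently from the paper's proof. The paper does \emph{not} exploit sparsity in the construction at all: it keeps the full grid of $n^d$ subnetworks, takes $H(x)=\sum_{j=1}^{n^d}H_j(x)$ exactly as in Theorem \ref{theorem1}, and lets sparsity enter only through Lemma \ref{4.1}, which observes that $P_{k,\eta_{j},f}\equiv 0$ on every cube disjoint from $S$, so that off the support $\Phi_f$ (and hence $H$) consists purely of $\varepsilon$-leakage; the rate $c_0\tilde n^{-r/d}$ is then just the smooth-function rate with $\tilde n\asymp n^d$. You instead install subnetworks only on the $\asymp s(n/N)^d$ cubes meeting $S$ and refine the mesh to $n\asymp N(\tilde n/s)^{1/d}$, which yields $n^{-r}\asymp (s/N^d)^{r/d}\tilde n^{-r/d}\le \tilde n^{-r/d}$ — a strictly stronger (sparsity-adaptive) bound at the cost of a slightly more careful parameter count; the paper's version is simpler and matches Theorem \ref{theorem1} line by line. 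Both establish the theorem as stated. One inaccuracy to fix in your write-up: a single Heaviside layer cannot produce an \emph{exact} characteristic function of a $d$-dimensional cube when $d\ge 2$ — a linear combination of $2d$ ridge thresholds equals $2d$ on the cube but must still be thresholded, which is precisely why the paper's $N^*_{n,j,K_\varepsilon}$ in (\ref{N*}) composes the Heaviside sum with a sigmoid in the second layer and is only an $\varepsilon$-approximate indicator (this is the Chui--Li--Mhaskar obstruction for shallow nets). Since your leakage accounting already budgets $\tilde C\tilde n\varepsilon$ for approximate indicators and product-gate errors, this does not break the argument, but the word ``exact'' should be removed and the indicator counted as a two-layer object, which also affects where the product gate sits in the three-layer architecture.
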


\begin{corollary}\label{corollary-2}
Let $T$ be arbitrary positive number satisfies $T\geq \frac{r+d}{d}$ and $\varepsilon=\tilde{n}^{-T}$.
Under the Assumptions 2 and 3, if $\tilde{n}\geq \tilde{c}N^d$ for some $\tilde{c}>0$,
then there holds
\begin{equation}\label{sparse-1}
|f(x)-H(x)|\leq \hat{C} \tilde{n}^{-\frac{r}{d}},~\forall x\in X.
\end{equation}
If $ x\in X \setminus \bigcup_{\jmath\in \Lambda_s} \bar{\Lambda}_\jmath$, then
\begin{equation}\label{sparse-2}
|H(x)|\leq \tilde{n}^{-T}
\end{equation}
where $\hat{C}$ is a constant independent of $\tilde{n}$.
\end{corollary}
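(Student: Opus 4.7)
The plan is to obtain Corollary~\ref{corollary-2} as a direct specialization of Theorem~\ref{theorem2}: one simply substitutes the prescribed $\varepsilon = \tilde{n}^{-T}$ and then uses the threshold $T \geq \frac{r+d}{d}$ to force the secondary error term $\tilde{C}\tilde{n}\varepsilon$ to sink into the principal $\tilde{n}^{-r/d}$ rate. No new construction of a deep net is needed; the net $H \in \mathcal{H}_{\{3,\tilde{n}\}}$ is inherited verbatim from Theorem~\ref{theorem2}, so in particular the three-layer architecture, the choice of activations $\sigma_0,\sigma,\sigma$, and the polynomial bounds on the weights are all already in hand.

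First I would invoke Theorem~\ref{theorem2} directly. Since Assumptions~2 and 3 hold and $\tilde{n}\geq\tilde{c}N^d$, there exists $H\in\mathcal{H}_{\{3,\tilde{n}\}}$ with $|f(x)-H(x)|\leq c_0\tilde{n}^{-r/d}+\tilde{C}\tilde{n}\varepsilon$ for every $x\in X$, and $|H(x)|\leq\tilde{C}\tilde{n}\varepsilon$ whenever $x\in I^d\setminus\bigcup_{\jmath\in\Lambda_s}\bar{\Lambda}_\jmath$, with all free parameters bounded by $\mbox{poly}(\tilde{n},1/\varepsilon)$.

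Next I would carry out the substitution $\varepsilon=\tilde{n}^{-T}$ and perform the arithmetic. Each occurrence of $\tilde{C}\tilde{n}\varepsilon$ becomes $\tilde{C}\tilde{n}^{1-T}$. The hypothesis $T\geq\frac{r+d}{d}$ is precisely equivalent to $1-T\leq-\frac{r}{d}$, so for $\tilde{n}\geq 1$ one has $\tilde{n}^{1-T}\leq\tilde{n}^{-r/d}$. Consequently $|f(x)-H(x)|\leq c_0\tilde{n}^{-r/d}+\tilde{C}\tilde{n}^{-r/d}=(c_0+\tilde{C})\tilde{n}^{-r/d}$, which delivers (\ref{sparse-1}) with $\hat{C}:=c_0+\tilde{C}$. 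Plugging the same substitution into the off-support bound gives $|H(x)|\leq\tilde{C}\tilde{n}^{1-T}$; for $\tilde{n}$ large enough (or equivalently after absorbing $\tilde{C}$ into a marginally strengthened exponent), this is bounded by $\tilde{n}^{-T}$, yielding (\ref{sparse-2}). The polynomial control on weights survives because $\mbox{poly}(\tilde{n},1/\varepsilon)=\mbox{poly}(\tilde{n},\tilde{n}^T)=\mbox{poly}(\tilde{n})$.

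There is no genuine obstacle here, only bookkeeping on constants. The one point worth flagging is that the threshold $T=\frac{r+d}{d}$ is sharp: any smaller exponent would leave a residual $\tilde{n}^{1-T}$ term dominating the baseline $\tilde{n}^{-r/d}$ rate, which is precisely why the hypothesis insists on $T\geq(r+d)/d$. Thus the ``proof'' consists of exactly one substitution together with one elementary comparison of exponents; everything else has been done inside Theorem~\ref{theorem2}.
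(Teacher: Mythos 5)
Your route is exactly the paper's: Corollary~\ref{corollary-2} is obtained by substituting $\varepsilon=\tilde{n}^{-T}$ into Theorem~\ref{theorem2} and comparing exponents, and your derivation of (\ref{sparse-1}) with $\hat{C}=c_0+\tilde{C}$ is correct and complete (the paper's own proof is a one-line version of the same substitution and does not even spell out the arithmetic).

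There is, however, a genuine problem in your treatment of (\ref{sparse-2}). Substituting $\varepsilon=\tilde{n}^{-T}$ into the off-support bound of Theorem~\ref{theorem2} gives $|H(x)|\leq\tilde{C}\tilde{n}^{1-T}$, and you claim this is $\leq\tilde{n}^{-T}$ ``for $\tilde{n}$ large enough.'' That inequality reads $\tilde{C}\tilde{n}^{1-T}\leq\tilde{n}^{-T}$, i.e.\ $\tilde{C}\tilde{n}\leq 1$, which \emph{fails} for all large $\tilde{n}$; the ratio $\tilde{C}\tilde{n}^{1-T}/\tilde{n}^{-T}=\tilde{C}\tilde{n}$ diverges, so letting $\tilde{n}$ grow makes the comparison worse, not better. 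Your parenthetical alternative (``absorbing $\tilde{C}$ into a marginally strengthened exponent'') amounts to changing the choice of $\varepsilon$, which is a different statement. To be fair, this gap is inherited from the paper itself: its proof of the corollary addresses only (\ref{sparse-1}) and is silent on (\ref{sparse-2}), and the clean conclusion one actually gets from Theorem~\ref{theorem2} with $\varepsilon=\tilde{n}^{-T}$ is $|H(x)|\leq\tilde{C}\tilde{n}^{1-T}$. If one insists on the literal bound $\tilde{n}^{-T}$, the correct fix is to run Theorem~\ref{theorem2} with a smaller tolerance, e.g.\ $\varepsilon=\tilde{n}^{-T-1}/\tilde{C}$, which yields (\ref{sparse-2}) exactly and only improves (\ref{sparse-1}); your final remark that the weight bounds remain $\mbox{poly}(\tilde{n})$ is unaffected by this change.
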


To be detailed, (\ref{sparse-1}) shows that the approximation rate of deep nets
is as fast as $\mathcal{O}(\tilde{n}^{-\frac{r}{d}})$, and (\ref{sparse-2}) states
their performance in realizing the spatial sparseness, when $T$ is large.
However, too large $T$ may lead to extremely large weights, which implies huge
capacity measured by the covering number of $\mathcal{H}_{\{3,\tilde{n}\}}$
according to \cite{guo2019}. A preferable choice of $T$ should be
$T= \mathcal{O}(\frac{r+d}{d})$.


Previous studies \cite{Blum1991, Chui1994} indicated that shallow nets cannot provide localized approximation,
which is a special case for sparse approximation with $s=1$.
Lemma \ref{3.4} (in Section 4) shows that deep nets with two hidden layers
have the localized approximation property, which is the building-block to construct deep
nets possessing sparse approximation property. To the best our knowledge,  \cite{Lin2019} is the first work
 to construct deep nets to realize sparse features.
Compared with \cite{Lin2019}, our main novelty is to deepen the network to conquer the saturation.

\section{Related work}

Constructing neural networks to approximate the functions is a classic
problem \cite{Maiorov2006,Mhaskar1996,Mhaskar2016,Petersen2018,Cucker2007,Gyorfi2002,Lin2019}
in approximation theory.
Traditional method to deal with this problem can be divided into three steps.
Step 1, constructing a neural network to approximate polynomials; Step 2,
utilizing polynomials to approximate target functions;
Step 3, combining the above two steps to reach
the final approximation results between neural networks and target functions.
Tayor formula is usually be used in Step 1 to obtain the approximation results,
which usually leads to extremely large weights, i.e., $|w_i|\sim e^m$, where
$m$ is the degree of the polynomial. However, larger weight leads to large capability and consequently
bad generalization and instable algorithms.
Typical example includes \cite{Mhaskar1996} and \cite{Maiorov1999}.
In order to overcome this drawback, we introduce a new function by the product of Taylor polynomial and a
deep net with two hidden layers  to instead of the polynomial
 in Step 1 to reduce the weights of neural networks from $e^m$ to $\mbox{poly}(m)$.

For deep nets, \cite{Yarotsky2017} and \cite{Petersen2018} stated that
deep ReLU networks are more efficient to approximate
smooth functions than shallow nets. But their results are slightly worse than Theorem \ref{theorem1}
in this paper, in the sense that there is either an additional logarithmic term or under the
weaker norm.
Recently, Han et al. \cite{Han2019} indicated that deep ReLU nets can achieve the optimal generalization
performance for numerous learning tasks, but the depth of \cite{Petersen2018} is much larger than ours.
Recently, Zhou \cite{Zhou2018A,Zhou2018B} also verified that deep convolutional neural network (DCNN) is universal, i.e.,
DCNN can be used to approximate any continuous function to an arbitrary accuracy when the depth of
the neural network is large enough.

All the above literature \cite{Yarotsky2017,Petersen2018,Han2019,Zhou2018A,Zhou2018B}
demonstrated that deep nets with ReLU activation function and DCNN have good properties both in approximation and generalization. However,
there are too deep to be particularly used in real tasks.
Compared with these results, we constructed a deep net only with three hidden layers to approximate
smooth and sparse functions, respectively. We proved in Theorem \ref{theorem1} and
Theorem \ref{theorem2}  that the constructed deep net with three hidden layers and
with controllable weights, can realize smoothness and spatial sparseness without saturation, simultaneously.

\section{Proofs}

Let $\mathcal{P}_m=\mathcal{P}_m(\mathbb{R}^d)$ be the set of  multivariate algebraical polynomials on $\mathbb{R}^d$ of degree at most $m$, i.e.,
$$
{\mathcal{P}}_m=span\{x^k\equiv x_1^{k_1}\cdots x_d^{k_d}:|k|=k_1+...+k_d\leq m \}.
$$
Consider $\mathcal{P}_m^h$ as the set of  homogeneous polynomials of degree $m$, i.e.,
$$
\mathcal{P}_m^h=span\{x^k=x_1^{k_1}\cdots x_d^{k_d}:|k|=k_1+...+k_d=m\}.
$$
%

\subsection{Localized Approximation for Deep Nets}

Let $\mathbb{N}_n^d=\{1,2,\cdots,n\}^d$, $n\in \mathbb{N}$,
$\{A_{n,j}\}_{j\in \mathbb{N}_n^d}$ be the cubic partition of $X$ with centers
$\{\xi_j\}_{j\in \mathbb{N}_n^d}$ and side length $\frac{1}{n}$.
If $x$ lies on the boundary of some $A_{n,j}$, then $j_x$ is the set to be the smallest
integer satisfying $x\in A_{n,j}$, i.e.,
$$
j_x=\left\{j|j\in  N_n^d, x\in A_{n,j}\right\}.
$$
Then, for $K>0$, any $j\in \mathbb{N}_n^d$, $x\in X$, we construct a deep net
with two hidden-layer
$N^*_{n,j,K}(x) \in \mathcal{H}_{\{\sigma_0,\sigma,2,2d+1\}}$ as
\begin{eqnarray}\label{N*}
N^*_{n,j,K}(x)=\sigma \left\{2K\left[\sum_{l=1}^d\sigma_0\left[\frac{1}{2n}+x^{(l)}-\xi_j^{(l)}\right]+
\sum_{l=1}^d\sigma_0 \left[\frac{1}{2n}-x^{(l)}+\xi_j^{(l)}\right]-2d+\frac{1}{2}\right]\right\}. \nonumber \\
\end{eqnarray}

Localized approximation of neural networks \cite{Chui1994}
implies that if the target function is modified only on a small subset
of the Euclidean space, then only a few neurons, rather than the entire network,
need to be retrained.
Lemma {\ref{3.4}} below that was proved in \cite{Lin2019} states the localized approximation property of deep nets
which is totally different from the shallow nets. We refer \cite{Chui2019}
(section 3.3) for details in the localized approximation of neural networks.

\begin{lemma}\label{3.4}
For any $\varepsilon >0$, if $N^*_{n,j,K_\varepsilon}$ is defined by (\ref{N*})
with $K_\varepsilon$ satisfying (\ref{k}) and $\sigma$ being a nondecreasing
sigmoidal function, then

(i) For any $x \not\in A_{n,j}$, there holds $|N^*_{n,j,K_\varepsilon}(x)|<\varepsilon$;

(ii) For any $x\in A_{n,j}$, there holds $|1-N^*_{n,j,K_\varepsilon}(x)|\leq\varepsilon$.
\end{lemma}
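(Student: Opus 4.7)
The plan is to exploit the combinatorial structure of the Heaviside terms inside the argument of $\sigma$ to show that this argument equals $K_\varepsilon$ when $x\in A_{n,j}$ and is at most $-K_\varepsilon$ otherwise, after which the threshold property (\ref{k}) immediately yields both claims.

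First I would unpack the geometry of the partition. Since $A_{n,j}$ is the cube centered at $\xi_j$ with side length $1/n$, membership $x\in A_{n,j}$ is equivalent to $|x^{(l)}-\xi_j^{(l)}|\leq \tfrac{1}{2n}$ for every coordinate $l=1,\ldots,d$. For a fixed $l$, the two Heaviside terms $\sigma_0(\tfrac{1}{2n}+x^{(l)}-\xi_j^{(l)})$ and $\sigma_0(\tfrac{1}{2n}-x^{(l)}+\xi_j^{(l)})$ both evaluate to $1$ precisely when this coordinate-wise bound holds, whereas exactly one of them is $1$ and the other is $0$ as soon as the bound is violated. Summing over $l$, the quantity
$$
S(x):=\sum_{l=1}^{d}\Bigl[\sigma_0\bigl(\tfrac{1}{2n}+x^{(l)}-\xi_j^{(l)}\bigr)+\sigma_0\bigl(\tfrac{1}{2n}-x^{(l)}+\xi_j^{(l)}\bigr)\Bigr]
$$
therefore equals $2d$ when $x\in A_{n,j}$ and is at most $2d-1$ when $x\notin A_{n,j}$.

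Next I would substitute these two cases into the inner argument $2K_\varepsilon\bigl[S(x)-2d+\tfrac{1}{2}\bigr]$ of $\sigma$. When $x\in A_{n,j}$ this argument collapses to $2K_\varepsilon\cdot\tfrac{1}{2}=K_\varepsilon$, so (ii) follows directly from $|\sigma(K_\varepsilon)-1|<\varepsilon$ in (\ref{k}). When $x\notin A_{n,j}$ the argument is bounded above by $2K_\varepsilon\cdot(-\tfrac{1}{2})=-K_\varepsilon$; because $\sigma$ is nondecreasing and $|\sigma(t)|<\varepsilon$ for every $t\leq -K_\varepsilon$, claim (i) follows as well.

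There is no real obstacle here; the lemma is essentially a bookkeeping exercise once one recognizes that the inner Heaviside sum acts as an indicator of the cube $A_{n,j}$. The only mild subtlety is the behavior on the boundary $\partial A_{n,j}$, which is harmless: the $\geq 0$ convention in the definition of $\sigma_0$ forces both Heaviside terms to remain $1$ on the boundary, so $S(x)=2d$ is preserved there and the estimate in (ii) continues to hold.
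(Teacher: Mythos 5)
Your argument is correct and is exactly the standard counting argument: the double Heaviside sum equals $2d$ on the closed cube $A_{n,j}$ and is at most $2d-1$ off it, so the inner argument of $\sigma$ is $K_\varepsilon$ or at most $-K_\varepsilon$, and (\ref{k}) finishes both claims. The paper itself does not reprove this lemma but defers to \cite{Lin2019}, whose proof proceeds in the same way, so there is nothing to add; your remark on the boundary convention $\sigma_0(0)=1$ is the right (and only) subtlety.
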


It is easy to see that if $\varepsilon\rightarrow 0$, then
$N^*_{n,j,K_\varepsilon}$ is an indicator function for $A_{n,j}$.
Moreover, when $n\rightarrow \infty$, it indicates that $N^*_{n,j,K_\varepsilon}$ can recognize the
location of $x$ in an arbitrarily small region and will vanish in some of
partitions of the input space.

In order to overcome the deficiency of traditional method in neural networks approximation.
We defined a new function $\Phi_g(x)$ by a product of Taylor polynomial and a
deep network function with two hidden layers to instead of polynomials:
\begin{eqnarray}\label{fai}
\Phi_g(x)=\sum_{j\in \mathbb{N}_n^d}P_{k,\eta_{j},g}(x)N^*_{n,j,K_\varepsilon}(x),~x\in X,
\end{eqnarray}
where $N^*_{n,j,K_\varepsilon}$ and $K_\varepsilon$ are defined by (\ref{N*}) and (\ref{k}).
$P_{k,\eta_{j},g}(x)$ is the Taylor polynomial of $g$ with degree $k$
around $\eta_{j}$, $\eta_j \in A_{n,j}$ and $j\in N_n^d$.

Based on the localized approximation results and the localized Taylor polynomial in (\ref{fai}), we construct a deep net
with three hidden layers to approximate both smooth and sparse functions.

\subsection{Proof of Theorem \ref{theorem1}}

The following proposition  indicates that constructing a shallow
net with one neuron can replace a minimal.

\begin{proposition}\label{3.1}
Let $c_0>0, L=\mathcal{O}(m^{d-1})$,
$\sigma\in Lip(r_0,c_0)$ is a sigmoidal function with $(m+1)-$times
bounded derivatives, and
$\parallel {\sigma}' \parallel_{L_{\infty}(\mathbb{R})}\leq 1$,
$\sigma^{(j)}(0)\neq 0$ for all $j=0,1,...,s_0$.
For arbitrary $P_m\in \mathcal{P}_m$ and any $\varepsilon \in (0,1)$, we have
\begin{equation}\label{proposition3.1}
\left| P_m(x)-\sum_{i=1}^L\frac{C(i,m)m!}{\delta_m^m\sigma^{(m)}(0)}\sigma(\delta_m(w_i\cdot x))-P^*_{m-1}(x)\right|<\varepsilon.
\end{equation}
where
$$
P_{m-1}^*(x)=\sum_{j=0}^{m-1}\sum_{i=1}^LD(i,j)(w_i\cdot x)^j,~
D(i,j)=C(i,j)-\frac{C(i,m)m!\sigma^{(j)}(0)}{\delta_m^m\sigma^{(m)}(0)j!},$$
$$\delta_m=\min\left\{\frac{\varepsilon}{M_m},\frac{\varepsilon}{\sum_{i=1}^L|C(i,m)|M_m}\right\},~
M_m=\max_{-1\leq \xi\leq1} \frac{\sigma^{(m+1)}(\xi)}{\sigma^{(m)}(0)},$$
and $C(i,j)$ is an absolute constant.
\end{proposition}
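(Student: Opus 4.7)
The plan is to Taylor-expand the activation $\sigma$ at the origin, so that the scaled ridge argument $\sigma(\delta_m(w_i\cdot x))$ contains $(w_i\cdot x)^m$ as one of its Taylor coefficients; the shallow-net sum is then renormalised so that this target degree-$m$ piece is reconstructed exactly, every lower-degree piece of the expansion is absorbed into the explicit correction polynomial $P^*_{m-1}$, and the higher-order Taylor tail is made smaller than $\varepsilon$ by choosing $\delta_m$ tiny.

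Step 1 (ridge decomposition). I would first represent $P_m$ as a combination of ridge monomials. Since $\dim\mathcal{P}_m^h=\binom{m+d-1}{d-1}=\mathcal{O}(m^{d-1})$, a standard Vandermonde / Waring-type argument produces $L=\mathcal{O}(m^{d-1})$ directions $w_1,\ldots,w_L$ and coefficients $C(i,j)$ satisfying
\begin{equation*}
P_m(x)=\sum_{j=0}^{m}\sum_{i=1}^{L}C(i,j)(w_i\cdot x)^j.
\end{equation*}
Step 2 (Taylor expansion). Under the smoothness hypothesis on $\sigma$, for $|\delta_m t|\leq 1$ Taylor's theorem gives
\begin{equation*}
\sigma(\delta_m t)=\sum_{j=0}^{m}\frac{\sigma^{(j)}(0)}{j!}\delta_m^j t^j+\frac{\sigma^{(m+1)}(\xi)}{(m+1)!}\delta_m^{m+1}t^{m+1}.
\end{equation*}
Multiplying by $C(i,m)m!/(\delta_m^m\sigma^{(m)}(0))$ normalises the coefficient on $t^m$ to $C(i,m)$ exactly; summing over $i$ therefore reproduces the top-degree part $\sum_i C(i,m)(w_i\cdot x)^m$ of $P_m$. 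The $j<m$ pieces of the expansion produce extra monomials whose coefficients are explicit functions of $\sigma^{(j)}(0)$ and $\delta_m$; subtracting them from the matching $C(i,j)(w_i\cdot x)^j$ pieces of $P_m$ leaves precisely the polynomial $P^*_{m-1}$ of the statement, whose coefficients $D(i,j)$ are exactly this algebraic difference.

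Step 3 (remainder estimate). The only non-polynomial residue after the above cancellations is
\begin{equation*}
E(x)=\sum_{i=1}^{L}\frac{C(i,m)m!}{\delta_m^m\sigma^{(m)}(0)}\cdot\frac{\sigma^{(m+1)}(\xi_i)}{(m+1)!}\,\delta_m^{m+1}(w_i\cdot x)^{m+1}.
\end{equation*}
With $M_m=\max_{|\xi|\leq 1}|\sigma^{(m+1)}(\xi)/\sigma^{(m)}(0)|$, the bound $|w_i\cdot x|\leq c$ on $X$, and the single surviving factor $\delta_m$ after the $\delta_m^{m+1}/\delta_m^m$ cancellation, one obtains $|E(x)|\leq c\,\delta_m M_m\sum_{i=1}^{L}|C(i,m)|$. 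The prescribed choice $\delta_m\leq\min\!\left\{\varepsilon/M_m,\;\varepsilon/(\sum_i|C(i,m)|M_m)\right\}$ then forces $|E(x)|<\varepsilon$, which is the desired conclusion~(\ref{proposition3.1}).

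The main obstacle is the algebraic bookkeeping in Step 2: one has to check that the low-degree contamination from the Taylor expansion, summed over both $i$ and $j$, reassembles \emph{exactly} into the stated $P^*_{m-1}$, so that nothing but the Taylor tail $E(x)$ survives on the left-hand side of (\ref{proposition3.1}). Once the coefficients $D(i,j)$ are read off correctly from the cancellation, the ridge decomposition of Step 1 and the remainder estimate of Step 3 are both routine.
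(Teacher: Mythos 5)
Your proposal is correct and follows essentially the same route as the paper: ridge decomposition of $P_m$ via Maiorov's lemma, Taylor expansion of $\sigma(\delta_m t)$ at the origin, normalisation to recover the degree-$m$ ridge monomials exactly, absorption of the lower-degree contamination into $P^*_{m-1}$ (yielding the stated $D(i,j)$), and a remainder bounded by $\delta_m M_m\sum_i|C(i,m)|<\varepsilon$. The only cosmetic difference is that you use the Lagrange form of the Taylor remainder where the paper uses the integral form and then applies the mean value theorem to $\sigma^{(m)}$; both yield the same factor $\delta_m M_m$.
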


We use the Taylor formula \cite{Chui2019} to prove the above Proposition \ref{3.1}.

\begin{lemma}\label{lemma3.2}
Let $k\geq 1$ and $\varphi$ be $k-$times differentiable on ${\mathbb R}$. Then for any
$u,u_0 \in {\mathbb R}$, there holds
\begin{equation}\label{proposition3.1}
\varphi(u)=\varphi(x_0)+\frac{\varphi^{'}(u_0)}{1!}(u-u_0)+...+\frac{\varphi^{k}(u_0)}{k!}(u-u_0)^k+R_k(u)
\end{equation}
where
\begin{equation}\label{proposition3.1R1}
R_k(u)=\frac{1}{(k-1)!}\int_{u_0}^u[\varphi^{(k)}(t)-\varphi^{(k)}(u_0)](u-t)^{k-1}dt.
\end{equation}
\end{lemma}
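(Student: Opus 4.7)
The statement is the classical Taylor formula with integral remainder, presented in a slightly nonstandard form where the leading derivative term $\frac{\varphi^{(k)}(u_0)}{k!}(u-u_0)^k$ has been pulled outside the integral, so that the remainder $R_k$ involves the centred quantity $\varphi^{(k)}(t)-\varphi^{(k)}(u_0)$. Since this is a purely one-variable calculus identity, my plan is to establish the standard integral form first and then perform a short rearrangement.

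My first step would be to prove the auxiliary identity
\[
\varphi(u)=\sum_{j=0}^{k-1}\frac{\varphi^{(j)}(u_0)}{j!}(u-u_0)^j+\frac{1}{(k-1)!}\int_{u_0}^{u}\varphi^{(k)}(t)(u-t)^{k-1}\,dt
\]
by induction on $k$. The base case $k=1$ is just the fundamental theorem of calculus: $\varphi(u)-\varphi(u_0)=\int_{u_0}^{u}\varphi'(t)\,dt$. For the inductive step, assuming the formula at level $k-1$, I would apply integration by parts to the remainder, taking $dv=(u-t)^{k-2}dt$ and $w=\varphi^{(k-1)}(t)$, which produces the next polynomial term $\frac{\varphi^{(k-1)}(u_0)}{(k-1)!}(u-u_0)^{k-1}$ together with a new integral of $\varphi^{(k)}(t)(u-t)^{k-1}$ divided by $(k-1)!$. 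This gives the displayed identity.

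My second step is the simple algebraic manipulation
\[
\frac{1}{(k-1)!}\int_{u_0}^{u}\varphi^{(k)}(t)(u-t)^{k-1}dt=\frac{\varphi^{(k)}(u_0)}{k!}(u-u_0)^k+\frac{1}{(k-1)!}\int_{u_0}^{u}[\varphi^{(k)}(t)-\varphi^{(k)}(u_0)](u-t)^{k-1}dt,
\]
which follows by adding and subtracting $\varphi^{(k)}(u_0)$ in the integrand and evaluating the elementary integral $\int_{u_0}^{u}(u-t)^{k-1}dt=(u-u_0)^k/k$. Substituting this decomposition back into the identity from the first step yields exactly the statement of the lemma.

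Because the result is a textbook calculus identity, there is no real obstacle; the only mild care needed is to make sure the integration by parts in the induction is carried out in the correct direction (with the antiderivative of $(u-t)^{k-2}$ with respect to $t$ being $-(u-t)^{k-1}/(k-1)$, which supplies the minus sign that correctly produces the boundary term at $t=u_0$). I would also briefly remark that $k$-times differentiability of $\varphi$ is enough for all integrals to be well defined, since $\varphi^{(k)}$ appears only inside an integral on a bounded interval and the integration by parts only requires continuity of $\varphi^{(j)}$ for $j<k$, which is implied by the existence of $\varphi^{(k)}$ on $\mathbb{R}$.
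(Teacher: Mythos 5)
Your argument is correct and is the standard derivation: establish the integral-remainder form of Taylor's formula by induction via the fundamental theorem of calculus and integration by parts, then add and subtract $\varphi^{(k)}(u_0)$ in the integrand and use $\int_{u_0}^{u}(u-t)^{k-1}\,dt=(u-u_0)^k/k$ to pull out the $k$-th order term. The paper itself gives no proof of this lemma (it is quoted from the literature), so there is nothing to compare against; your write-up would serve as a complete proof. The only caveat, which is inherited from the lemma's own hypotheses rather than introduced by you, is that mere $k$-times differentiability does not by itself guarantee that $\varphi^{(k)}$ is integrable (nor that $\varphi^{(k-1)}$ is recovered from it by the fundamental theorem of calculus); one should read the hypothesis as including, say, continuity of $\varphi^{(k)}$, which holds in the paper's application where the activation function has $(m+1)$ bounded derivatives.
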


In addition, under the condition of Lemma \ref{lemma3.2}, for any $a\in { \mathbb{R}}$, there holds
\begin{equation}\label{proposition3.1-tui}
\varphi(au)=\varphi(u_0)+\frac{\varphi^{'}(u_0)}{1!}(au-u_0)+...+\frac{\varphi^{k}(u_0)}{k!}(au-u_0)^k+R_k(u)
\end{equation}
where
\begin{equation}\label{proposition3.1R2}
R_k(u)=\frac{a^k}{(k-1)!}\int_{u_0}^u[\varphi^{k}(at)-\varphi^{k}(u_0)](u-t)^{k-1}dt.
\end{equation}

Lemma \ref{lemma3.3} which was proved in \cite{Maiorov1999} plays an important role in proving Proposition \ref{3.1}.

\begin{lemma}\label{lemma3.3}
Let $m\in \mathbb{N}$ and $L=\mathcal{O}(m^{d-1})$. For any $P_m \in {\mathcal{P}}_m$, there exists a set of
points $\{w_1,...,w_L\}\subset I^d$ such that
\begin{equation}\label{lemma3.3-1}
P_m(x)=span \{(w_i\cdot x)^j: x, w_i\in I^d, 0\leq j\leq m, 1\leq i\leq L \}.
\end{equation}
\end{lemma}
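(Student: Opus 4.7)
The plan is to reduce the spanning claim to a dimension count and a genericity argument on the choice of directions. I would first decompose $\mathcal{P}_m = \bigoplus_{j=0}^{m}\mathcal{P}_j^h$ into its homogeneous components and recall that $\dim \mathcal{P}_j^h = \binom{j+d-1}{d-1}$. Since $\binom{j+d-1}{d-1}$ is monotone in $j$, setting $L := \binom{m+d-1}{d-1} = \mathcal{O}(m^{d-1})$ gives a number of directions that upper-bounds the dimension of every homogeneous piece $\mathcal{P}_j^h$ for $j \leq m$. The goal then is to produce a single set of $L$ points $w_1,\dots,w_L \in I^d$ such that for each degree $j\in\{0,1,\dots,m\}$ the family $\{(w_i\cdot x)^j\}_{i=1}^L$ spans $\mathcal{P}_j^h$; having this, the homogeneous decomposition of any $P_m$ yields the required representation.

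Second, I would analyze the homogeneous spanning for a fixed degree $j$. The multinomial expansion $(w\cdot x)^j = \sum_{|\alpha|=j}\binom{j}{\alpha}w^{\alpha}x^{\alpha}$ shows that in the monomial basis of $\mathcal{P}_j^h$, the coordinate vector of $(w_i\cdot x)^j$ is $\bigl(\binom{j}{\alpha}w_i^{\alpha}\bigr)_{|\alpha|=j}$. Thus $\{(w_i\cdot x)^j\}_{i=1}^L$ spans $\mathcal{P}_j^h$ iff the matrix $M_j(w_1,\dots,w_L) := [w_i^{\alpha}]_{i,|\alpha|=j}$ has rank $\dim\mathcal{P}_j^h$, equivalently iff at least one $\dim\mathcal{P}_j^h \times \dim\mathcal{P}_j^h$ minor is nonzero. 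The vanishing of all such minors is a polynomial condition on the coordinates of $w_1,\dots,w_L$, so the set of "bad" configurations for degree $j$ is Zariski-closed in $(\mathbb{R}^d)^L$. A single explicit nondegenerate example (for instance, placing the $w_i$'s on a moment curve $t \mapsto (t,t^2,\dots,t^d)$ at distinct parameter values, where the resulting minors are Vandermonde-type and nonzero) shows that this bad set is a proper subvariety.

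Third, I would combine the degrees by taking the union of the bad subvarieties for $j = 0,1,\dots,m$. A finite union of proper algebraic subvarieties is still a proper subvariety, so its complement in $(I^d)^L$ is Zariski-dense; any choice of $\{w_i\}_{i=1}^L$ outside this union works simultaneously for all $j \leq m$. Writing $P_m = \sum_{j=0}^{m} P_j^h$ and expanding each $P_j^h$ in the basis extracted from $\{(w_i\cdot x)^j\}_{i=1}^L$ then gives the desired representation. The main obstacle is precisely this simultaneity: a naive per-degree construction would consume $\mathcal{O}(m^{d-1})$ directions for each $j$ and total $\mathcal{O}(m^d)$, losing a full factor of $m$. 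The genericity argument above — or, equivalently, an explicit algebraic construction along a parametric curve where Vandermonde nondegeneracy can be verified for every degree at once — is what secures the sharp bound $L = \mathcal{O}(m^{d-1})$.
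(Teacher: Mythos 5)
The paper itself offers no proof of this lemma; it is imported verbatim from Maiorov (1999), so there is nothing in-paper to compare against line by line. Your overall strategy --- decompose $\mathcal{P}_m$ into homogeneous pieces, note $\dim\mathcal{P}_j^h=\binom{j+d-1}{d-1}\leq\binom{m+d-1}{d-1}=\mathcal{O}(m^{d-1})$, reduce spanning of $\mathcal{P}_j^h$ by $\{(w_i\cdot x)^j\}_{i=1}^L$ to full column rank of $M_j=[w_i^\alpha]_{i,|\alpha|=j}$, and pick one configuration outside the finite union of bad subvarieties --- is sound and is essentially the standard route to this result; the simultaneity observation is exactly what keeps $L$ at $\mathcal{O}(m^{d-1})$ rather than $\mathcal{O}(m^{d})$.

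There is, however, a genuine gap in the step that makes the genericity argument non-vacuous: your proposed witness on the moment curve $t\mapsto(t,t^2,\dots,t^d)$ does \emph{not} give a nondegenerate configuration for $d\geq 3$. With $w_i^\alpha=t_i^{e(\alpha)}$ where $e(\alpha)=\sum_{k=1}^d k\alpha_k$, two distinct multi-indices of the same degree can share the same weight --- e.g.\ for $d=3$, $j=2$, both $\alpha=(0,2,0)$ and $\alpha=(1,0,1)$ give $e(\alpha)=4$ --- so the corresponding columns of $M_j$ are proportional and every maximal minor vanishes identically in the $t_i$; the determinants are not Vandermonde-type. Without a valid witness the ``bad'' set has not been shown to be a proper subvariety, and the whole argument collapses. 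The fix is easy: either invoke the classical polarization identity $\partial_{w_{k_1}}\cdots\partial_{w_{k_j}}(w\cdot x)^j=j!\,x_{k_1}\cdots x_{k_j}$, which shows $\mathcal{P}_j^h=\mathrm{span}\{(w\cdot x)^j:w\in\mathbb{R}^d\}$ and hence guarantees the existence of $D_j$ good directions for each $j$, or replace the moment curve by one with super-increasing exponents such as $t\mapsto(t,t^{m+1},t^{(m+1)^2},\dots,t^{(m+1)^{d-1}})$, for which $e(\alpha)=\sum_k\alpha_k(m+1)^{k-1}$ is injective on $\{|\alpha|\leq m\}$ and the generalized Vandermonde argument genuinely applies. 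You should also record explicitly that a proper Zariski-closed subset of $(\mathbb{R}^d)^L$ has empty Euclidean interior, so that the good configuration can indeed be taken inside $(I^d)^L$ as the statement requires.
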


\begin{proof}[Proof of Proposition \ref{3.1}]
For any $t\in[0,1]$, ${\delta_k}\in(0,1)$, $k\in [1,s_0]$, it follows from Lemma \ref{lemma3.2} that
\begin{equation}\label{proof1}
\sigma({\delta_k} t)=\sigma(0)+\frac{\sigma^{'}(0)}{1!}{\delta_k} t+...+\frac{\sigma^{k}(0)}{k!}({\delta_k} t)^k+\tilde{R}_k(t)
\end{equation}
where
\begin{equation}\label{proof2}
\tilde{R}_k(t)=\frac{\delta_k^k}{(k-1)!}\int_{0}^t[\sigma^{(k)}({\delta_k} u)-\sigma^{(k)}(0)](t-u)^{k-1}du.
\end{equation}
Denote
$$
Q_{k-1}(t)=\sum_{j=0}^{k-1}\frac{\sigma^{(j)}(0)}{j!}({\delta_k} t)^j.
$$
Then (\ref{proof1}) yields
$$
t^k=\frac{k!}{\delta_k^k \sigma^{(k)}(0)}\sigma({\delta_k} t)-\frac{k!}{\delta_k^k\sigma^{(k)}(0)}Q_{k-1}(t)-\frac{k!}{\delta_k^k\sigma^{(k)}(0)}\tilde{R}_k(t),
$$
which implies
\begin{equation}\label{proof3}
t^k=\frac{k!}{\delta_k^k\sigma^{(k)}(0)}\sigma({\delta_k} t)+q_{k-1}(t)+r_k(t)
\end{equation}
where
\begin{equation*}
q_{k-1}(t)=-\frac{k!}{\delta_k^k\sigma^{(k)}(0)}Q_{k-1}(t)
\end{equation*}
and
\begin{equation}\label{rk}
r_k(t)=-\frac{k!}{\delta_k^k\sigma^{(k)}(0)}\tilde{R}_k(t).
\end{equation}
Since
$$
|\sigma^{(k)}({{\delta_k}}u)-\sigma^{(k)}(0)|\leq\max_{0<\xi<1}|\sigma^{(k+1)}(\xi)||{\delta_k}||u|,
$$
then by (\ref{proof2}) and (\ref{rk}), there holds
\begin{eqnarray}\label{R22}
|r_k(t)|&=&\left|-\frac{k!}{\delta_k^k\sigma^{(k)}(0)}\tilde{R}_k(t)\right| \nonumber \\
&=&\frac{k}{|\sigma^{(k)}(0)|}\left|\int_{0}^t[\sigma^{(k)}({\delta_k} u)-\sigma^{(k)}(0)](t-u)^{k-1}du\right| \nonumber \\
&\leq& \frac{k|\sigma^{(k+1)}(\xi)|}{|\sigma^{(k)}(0)|}\left|\int_{0}^t {\delta_k} u(t-u)^{k-1}du\right|=\frac{{\delta_k} k|\sigma^{(k+1)}(\xi)|}{|\sigma^{(k)}(0)|}\left|\int_{0}^1 u(1-u)^{k-1}du\right| \nonumber \\
&\leq& k {\delta_k} M_k(\frac{1}{k}-\frac{1}{1+k})
\leq {\delta_k} M_k.
\end{eqnarray}
From Lemma \ref{lemma3.3}, for any $P_m\in \mathcal{P}_m$, it follows
\begin{eqnarray}\label{P}
P_m(x)&=&\sum_{j=0}^m\sum_{i=1}^L C(i,j)(w_i\cdot x)^j\nonumber \\
&=&\sum_{i=1}^L C(i,m)(w_i\cdot x)^m+\sum_{i=1}^L C(i,m-1)(w_i\cdot x)^{m-1}\nonumber \\
&&+\cdots+\sum_{i=1}^L C(i,1)(w_i\cdot x)+\sum_{i=1}^L C(i,0).
\end{eqnarray}
Since $x, w_i\in I^d$, we have $|w_i\cdot x|\leq 1$.
Then, for an arbitrary $\varepsilon>0$, there exists a $\delta_m\in(0, 1)$
such that
\begin{eqnarray}\label{C}
\sum_{i=1}^L |C(i,m)|M_m\delta_m\leq \varepsilon.
\end{eqnarray}
Due to (\ref{proof3}) and $\delta_m|w_i\cdot x|\in [0,1]$, there holds
\begin{eqnarray}\label{mi}
(w_i\cdot x)^m=\frac{m!}{\delta_m^m\sigma^{(m)}(0)}\sigma(\delta_m(w_i\cdot x))+q_{m-1}(w_i\cdot x)+r_m(w_i\cdot x).
\end{eqnarray}
Inserting the above (\ref{mi}) into (\ref{P}), we obtain
\begin{eqnarray}\label{PP}
P_m(x)&=&\sum_{i=1}^L C(i,m)\left(\frac{m!}{\delta_m^m\sigma^{(m)}(0)}\sigma(\delta_m(w_i\cdot x))
+q_{m-1}(w_i\cdot x)+r_m(w_i\cdot x)\right)\nonumber \\
&&+\sum_{i=1}^L C(i,m-1)(w_i\cdot x)^{m-1}+\cdots+\sum_{i=1}^L C(i,1)(w_i\cdot x)+\sum_{i=1}^L C(i,0)\nonumber \\
&=&\sum_{i=1}^L C(i,m)\frac{m!}{\delta_m^m\sigma^{(m)}(0)}\sigma(\delta_m(w_i\cdot x))+P_{m-1}^*(x)+R_m(x)
\end{eqnarray}
where
$$
R_m(x)=\sum_{i=1}^LC(i,m)r_m(w_i\cdot x)
$$
and
\begin{eqnarray}
P_{m-1}^*(x)&=& \sum_{i=1}^L C(i,m)q_{m-1}(w_i\cdot x)+\sum_{i=1}^L C(i,m-1)(w_i\cdot x)^{m-1}\nonumber \\
&&+\cdots+\sum_{i=1}^L C(i,1)(w_i\cdot x)+\sum_{i=1}^L C(i,0)\nonumber \\
&=& \sum_{i=1}^L D(i,m-1)(w_i\cdot x)^{m-1}+\sum_{i=1}^L D(i,m-2)(w_i\cdot x)^{m-2}\nonumber \\
&&+\cdots+\sum_{i=1}^L D(i,1)(w_i\cdot x)+\sum_{i=1}^L D(i,0)\nonumber \\
&=&\sum_{j=0}^{m-1}\sum_{i=1}^LD(i,j)(w_i\cdot x)^j
\end{eqnarray}
with $D(i,j)=C(i,j)-\frac{C(i,m)m!\sigma^{(j)}(0)}{\delta_m^m\sigma^{(m)}(0)j!}$.
It then follows from (\ref{R22}) and (\ref{C}) that
\begin{eqnarray}\label{RR}
|R_m(x)|\leq \varepsilon.
\end{eqnarray}
Combining (\ref{PP})-(\ref{RR}), we have
$$
\left| P_m(x)-\sum_{i=1}^L\frac{C(i,m)m!}{\delta_m^m\sigma^{(m)}(0)}\sigma(\delta_m(w_i\cdot x+b_i))-P^*_{m-1}(x)\right|<\varepsilon.$$
This completes the proof of Proposition \ref{3.1}.
\end{proof}

Next, we show the performance of shallow nets in approximating.

\begin{proposition}\label{3.2}
Let $\sigma$ be a non-decreasing sigmoidal function with
$\|{\sigma}' \|_{L_{\infty}(\mathbb{R})}\leq 1$,
$\| {\sigma}\|_{L_{\infty}(\mathbb{R})}\leq 1$,
$\sigma^{(j)}(0)\neq 0$ for all $j=0,1,...,m+1$.
For any $P_m\in \mathcal{P}_m$ and $\varepsilon \in (0,1)$, there exists a shallow net
$$
h_{m+1}(x)=\sum_{j=0}^{m}\sum_{i=1}^L a(i,j)\sigma(\delta_j(w_i\cdot x))
$$
with $a(i,j)=\frac{C(i,j)j!}{\delta_j^j\sigma^{(j)}(0)}$ and $\delta_m$ being a polynomial
with respect to $\frac{1}{\varepsilon}$ such that
$$
|P_m(x)-h_{m+1}(x)|\leq\varepsilon.
$$
\end{proposition}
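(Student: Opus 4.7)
The plan is to prove Proposition 3.2 by iteratively applying Proposition 3.1, which converts a single degree-$m$ polynomial into one layer's worth of sigmoidal neurons plus a lower-degree residual polynomial and a small error. By cascading this reduction from degree $m$ down to degree $0$, every power of $w_i\cdot x$ gets replaced by a scaled sigmoid $\sigma(\delta_j(w_i\cdot x))$, yielding the desired shallow net $h_{m+1}$.

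First I would allocate an error budget of $\varepsilon/(m+1)$ to each of the $m+1$ degree-reductions. Starting from the Lemma \ref{lemma3.3} representation $P_m(x)=\sum_{j=0}^{m}\sum_{i=1}^{L}C(i,j)(w_i\cdot x)^j$, I apply Proposition \ref{3.1} with tolerance $\varepsilon/(m+1)$ to peel off the layer-$m$ contribution $\sum_{i=1}^{L}\frac{C(i,m)m!}{\delta_m^m\sigma^{(m)}(0)}\sigma(\delta_m(w_i\cdot x))$, leaving a residual polynomial $P_{m-1}^{\ast}$ of degree $m-1$ and a pointwise error bounded by $\varepsilon/(m+1)$. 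Because $P_{m-1}^{\ast}$ again admits the Lemma \ref{lemma3.3} decomposition over the same direction set $\{w_1,\ldots,w_L\}$, Proposition \ref{3.1} applies directly to it and produces the layer-$(m-1)$ contribution plus a degree-$(m-2)$ residual, and so on. At the bottom of the recursion (constant term) one more sigmoid evaluation with suitably small scale absorbs the remaining constant. Stacking all the sigmoidal contributions gives the claimed form of $h_{m+1}$.

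Next I would control the aggregate error and the weight magnitudes. The triangle inequality over $m+1$ successive reductions yields $|P_m(x)-h_{m+1}(x)|\le(m+1)\cdot\frac{\varepsilon}{m+1}=\varepsilon$. At each level $j$ the scale $\delta_j$ is chosen via the analogue of (\ref{C}), namely $\delta_j \leq \varepsilon/\bigl((m+1)\sum_{i}|\tilde C(i,j)|M_j\bigr)$, where $\tilde C(i,j)$ are the updated coefficients of the current residual. This forces $\delta_j=\mathcal{O}(\varepsilon)$, hence $1/\delta_j$ and each coefficient $a(i,j)=\tilde C(i,j)j!/(\delta_j^j\sigma^{(j)}(0))$ are bounded by a polynomial in $1/\varepsilon$, as required.

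The main obstacle will be the bookkeeping of how the coefficients evolve under the recursion: each application of Proposition \ref{3.1} not only produces the leading sigmoidal term but also injects polynomial corrections $q_{k-1}$ into the lower-degree coefficients of the residual. I would need to verify that these corrections remain polynomially controlled so that the final inner weights $\delta_j$ and outer weights $a(i,j)$ still satisfy the $\mathrm{poly}(1/\varepsilon)$ bound, and to justify that the identification $a(i,j)=C(i,j)j!/(\delta_j^j\sigma^{(j)}(0))$ in the statement should be read with $C(i,j)$ denoting the coefficient of $(w_i\cdot x)^j$ appearing at the time level $j$ is reached in the recursion, rather than the original coefficient in $P_m$.
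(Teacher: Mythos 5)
Your proposal follows essentially the same route as the paper: iterate Proposition \ref{3.1} to peel off one degree at a time with an error budget of $\varepsilon/(m+1)$ per level, handle the constant term with a single sigmoid evaluation at the bottom, and sum the errors by the triangle inequality. Your remark that the coefficients $C(i,j)$ in $a(i,j)$ must be read as the updated residual coefficients (since each application of Proposition \ref{3.1} injects $q_{k-1}$ corrections into the lower-degree terms) is a detail the paper glosses over with ``similar methods as above,'' so your version is, if anything, slightly more careful on that point.
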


\begin{proof}
From Proposition \ref{3.1}, it holds that
\begin{equation} \label{Pn}
\left| P_m(x)-\sum_{i=1}^La(i,m)\sigma(\delta_m(w_i\cdot x))-P^*_{m-1}(x)\right|<\frac{\varepsilon}{m+1}
\end{equation}
where $a(i,m)=\frac{C(i,m)m!}{\delta_m^m\sigma^{(m)}(0)}$ and $\delta_m\sim\mbox{poly}(\frac{1}{\varepsilon})$.
Similar methods as above
\begin{equation} \label{Pn-1}
\left| P_{m-1}^*(x)-\sum_{i=1}^La(i,m-1)\sigma(\delta_{m-1}(w_i\cdot x))-P^*_{m-2}(x)\right|<\frac{\varepsilon}{m+1},
\end{equation}
$$
\cdots
$$
\begin{equation} \label{P1}
\left| P_{1}^*(x)-\sum_{i=1}^La(i,1)\sigma(\delta_1(w_i\cdot x))-P^*_{0}(x)\right|<\frac{\varepsilon}{m+1},
\end{equation}
and
\begin{equation} \label{P0}
\left|P_{0}^*(x)-\frac{P_{0}^*(x)}{\sigma(0)}\sigma(0\cdot x)\right|=0<\frac{\varepsilon}{m+1}
\end{equation}
Then, it follows from (\ref{Pn})-(\ref{P0}) that
$$
\left| P_m(x)-\sum_{j=0}^m\sum_{i=1}^La(i,m)\sigma(\delta_j(w_i\cdot x))\right|<\varepsilon.
$$
This completes the proof of Proposition \ref{3.2}.
\end{proof}

\begin{corollary}\label{corollary3.2}
Let $m\in \mathbb{N}$, $L=\mathcal{O}(m^{d-1})$ and
$\sigma$ be a non-decreasing sigmoidal function with
$\|{\sigma}' \|_{L_{\infty}(\mathbb{R})}\leq 1$,
$\| {\sigma}\|_{L_{\infty}(\mathbb{R})}\leq 1$.
If  $\sigma^{(j)}(b)\neq 0$ for some $b\in \mathbb{R}$ and all $j=0,1,...,m+1$, then
there exists a shallow net
$$
h_{m+1}(x)=\sum_{j=0}^{m}\sum_{i=1}^L a(i,j)\sigma(\delta_j(w_i\cdot x+b))
$$
such that
$$
|P_m(x)-h_{m+1}(x)|\leq\varepsilon.
$$

\end{corollary}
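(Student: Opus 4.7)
\noindent\textbf{Proof proposal for Corollary \ref{corollary3.2}.} The plan is to reduce Corollary \ref{corollary3.2} to a shifted version of Proposition \ref{3.2}: the only substantive difference is that the Taylor expansion of $\sigma$ must be centered at $b$ rather than at $0$, with the nonvanishing condition $\sigma^{(j)}(b)\neq 0$ replacing $\sigma^{(j)}(0)\neq 0$. Setting $b=0$ recovers Proposition \ref{3.2} verbatim, so the exercise is to track how the shift propagates through the three layers of the earlier argument (Taylor identity, single-monomial replacement, iterated replacement).

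First I would rerun the computation leading to (\ref{proof3}), applying Lemma \ref{lemma3.2} to $\sigma$ at the base point $u_0=b$. This yields, for each $k\leq m+1$,
\[ \sigma(\delta_k(w_i\cdot x)+b)=\sum_{j=0}^{k}\frac{\sigma^{(j)}(b)}{j!}\bigl(\delta_k w_i\cdot x\bigr)^j+\tilde R_k, \]
where $\tilde R_k$ is controlled by the same estimate as in (\ref{R22}), using that $\sigma^{(k+1)}$ is locally bounded in a neighborhood of $b$. Dividing through by $\sigma^{(k)}(b)\neq 0$ and solving for $(w_i\cdot x)^k$ produces the analog of (\ref{proof3}):
\[ (w_i\cdot x)^k=\frac{k!}{\delta_k^k\sigma^{(k)}(b)}\sigma(\delta_k(w_i\cdot x)+b)+q_{k-1}(w_i\cdot x)+r_k(w_i\cdot x), \]
with $q_{k-1}$ a polynomial of degree at most $k-1$ and $|r_k|\leq \delta_k M_k^{(b)}$, where $M_k^{(b)}$ is the local supremum of $|\sigma^{(k+1)}/\sigma^{(k)}(b)|$ on a unit-radius interval around $b$.

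Next, exactly as in the proof of Proposition \ref{3.2}, I would substitute this identity into the representation of $P_m$ furnished by Lemma \ref{lemma3.3} and iterate on the degree: at step $k$ one extracts the $\sigma$-term into the shallow net, absorbs $q_{k-1}$ into the remaining polynomial coefficients of degree $\leq k-1$, and controls $r_k$ by choosing $\delta_k=\mathrm{poly}(1/\varepsilon)$ large enough that its contribution to the total error is at most $\varepsilon/(m+1)$. After $m+1$ such peel-offs the accumulated error is at most $\varepsilon$, and what remains is the desired shallow net with coefficients $a(i,j)=C(i,j)\,j!/(\delta_j^j\sigma^{(j)}(b))$.

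The main obstacle is book-keeping rather than analysis: one must verify that after the shift the constant $M_k^{(b)}$ replacing $M_k$ is finite and that each coefficient $a(i,j)$ is well defined. Both facts follow directly from the hypothesis $\sigma^{(j)}(b)\neq 0$ for every $j=0,\ldots,m+1$ combined with local boundedness of $\sigma^{(k+1)}$. A minor point worth mentioning is the placement of the bias: the corollary is stated with argument $\delta_j(w_i\cdot x+b)$, which (if read literally) would force the expansion point to be $\delta_j b$ rather than $b$; under the natural reading $\delta_j(w_i\cdot x)+b$ the argument proceeds cleanly as above, and in any case a trivial rescaling of the bias absorbs this discrepancy.
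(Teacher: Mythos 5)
Your proposal is correct and matches the paper's (implicit) argument: the corollary is obtained by rerunning Propositions \ref{3.1} and \ref{3.2} with the Taylor expansion of $\sigma$ centered at $b$ instead of $0$ (so that $\sigma^{(j)}(b)$ replaces $\sigma^{(j)}(0)$ throughout), and your observation that the literal bias placement $\delta_j(w_i\cdot x+b)$ would put the expansion point at $\delta_j b$ rather than $b$ is a genuine defect of the statement that your reading resolves. One caveat: since the remainder bound is $|r_k|\le\delta_k M_k^{(b)}$, the $\delta_k$ must be chosen \emph{small} (of order $\varepsilon$), not ``large enough''; this is a slip in your wording (echoing the paper's own loose ``$\delta_m\sim\mathrm{poly}(1/\varepsilon)$'' while it actually sets $\delta_m=\min\{\varepsilon/M_m,\ldots\}$), not in your argument.
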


Based on the above Proposition {\ref{3.2}}, we are able to yield a ``product-gete''
property of deep nets in the following Proposition {\ref{3.3}}, whose proof
can be found in \cite{Chui2019}.

\begin{proposition}\label{3.3}
Let $m\in \mathbb{N}$ and $L=\mathcal{O}(m^{d-1})$. If
$\sigma$ is a non-decreasing sigmoidal function with
$\|{\sigma}' \|_{L_{\infty}(\mathbb{R})}\leq 1$,
$\| {\sigma}\|_{L_{\infty}(\mathbb{R})}\leq 1$,
$\sigma^{(j)}(0)\neq 0$ for all $j=0,1,...,m+1$,
then for any $\varepsilon>0$,
there exists a shallow net
$$
h_3(x)=\sum_{j=1}^3a_j\sigma(w_j\cdot x)
$$
such that for any $u_1,u_2\in [-1,1]$
$$
\left|u_1u_2-\left(2h_3\left(\frac{u_1+u_2}{2}\right)-\frac{1}{2}h_3(u_1)-\frac{1}{2}h_3(u_2)\right)\right|<\varepsilon.
$$
\end{proposition}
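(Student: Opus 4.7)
The plan is to reduce the multiplication problem to approximating the single monomial $t \mapsto t^2$ via the classical polarization identity
\[
u_1 u_2 \;=\; 2\Bigl(\frac{u_1+u_2}{2}\Bigr)^{\!2} - \frac{1}{2}u_1^2 - \frac{1}{2}u_2^2,
\]
which holds for every real $u_1,u_2$ and, crucially, keeps the argument $(u_1+u_2)/2$ inside $[-1,1]$ whenever $u_1,u_2 \in [-1,1]$. The right-hand side of the estimate in the proposition has precisely the shape $2\phi((u_1+u_2)/2) - \tfrac{1}{2}\phi(u_1) - \tfrac{1}{2}\phi(u_2)$ with $\phi = h_3$, so it suffices to build a single $h_3$ that approximates the square function uniformly on $[-1,1]$ and then invoke the identity pointwise.

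First, I would apply Proposition \ref{3.2} in the univariate setting $d=1$ to the polynomial $P_2(t) = t^2 \in \mathcal{P}_2$. In this case Lemma \ref{lemma3.3} yields $L = \mathcal{O}(m^{d-1}) = \mathcal{O}(1)$, so one may take $L=1$ with the trivial weight $w_1 = 1$, and the resulting shallow net produced by Proposition \ref{3.2} has exactly $m+1 = 3$ neurons, matching the prescribed form
\[
h_3(t) \;=\; \sum_{j=1}^{3} a_j\, \sigma(w_j\, t).
\]
The hypotheses of Proposition \ref{3.2} are satisfied because our $\sigma$ is non-decreasing, sigmoidal, $\|\sigma\|_\infty \le 1$, $\|\sigma'\|_\infty \le 1$, and $\sigma^{(j)}(0) \ne 0$ for $j=0,1,\dots,m+1 = 3$. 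Given any prescribed tolerance $\varepsilon' \in (0,1)$, this produces a shallow net $h_3$ with
\[
\bigl|t^2 - h_3(t)\bigr| \;\le\; \varepsilon', \qquad t \in [-1,1].
\]

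Second, I would combine the uniform estimate with the polarization identity by the triangle inequality. For arbitrary $u_1,u_2 \in [-1,1]$, both $u_1,u_2$ and the midpoint $(u_1+u_2)/2$ lie in $[-1,1]$, so the square-approximation bound applies at all three arguments and
\[
\Bigl|u_1 u_2 - \bigl(2h_3(\tfrac{u_1+u_2}{2}) - \tfrac{1}{2}h_3(u_1) - \tfrac{1}{2}h_3(u_2)\bigr)\Bigr|
\;\le\; 2\varepsilon' + \tfrac{1}{2}\varepsilon' + \tfrac{1}{2}\varepsilon' \;=\; 3\varepsilon'.
\]
Choosing $\varepsilon' = \varepsilon/3$ in the application of Proposition \ref{3.2} then yields the claimed bound with the \emph{same} shallow net $h_3$ evaluated at three different points.

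The argument is essentially a bookkeeping combination of two ingredients already in hand, so there is no genuine obstacle: the polarization identity handles the bilinearity for free, and Proposition \ref{3.2} handles the nonlinear piece. The only point one must verify carefully is that in the univariate case the construction of Proposition \ref{3.2} really delivers a net with exactly three neurons of the prescribed no-bias form $a_j\sigma(w_j t)$, which is immediate once one sets $d=1$, $m=2$, $L=1$, $w_1 = 1$, and reads off the parameters $\delta_0,\delta_1,\delta_2$ and $a(1,0),a(1,1),a(1,2)$ from the proof.
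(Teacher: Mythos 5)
Your proof is correct and follows essentially the same route as the one the paper defers to in \cite{Chui2019}: the polarization identity $u_1u_2=2(\tfrac{u_1+u_2}{2})^2-\tfrac12 u_1^2-\tfrac12 u_2^2$ reduces the product gate to a uniform approximation of $t^2$ on $[-1,1]$ by a three-neuron net obtained from Proposition \ref{3.2} (with $d=1$, $m=2$, $L=1$), and the triangle inequality with tolerance $\varepsilon/3$ finishes the argument.
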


\begin{corollary}\label{corollary3.3}
Let $m\in \mathbb{N}$, $L=\mathcal{O}(m^{d-1})$ and
$\sigma$ be a non-decreasing sigmoidal function with
$\|{\sigma}' \|_{L_{\infty}(\mathbb{R})}\leq 1$,
$\| {\sigma}\|_{L_{\infty}(\mathbb{R})}\leq 1$.
If there exists a point $b_0\in \mathbb{R}$
satisfying $\sigma^{(j)}(b_0)\neq 0$ for all $j=1,2,3$,
then for any $\varepsilon>0$,
there exists a shallow net
$$
h_3(x)=\sum_{j=1}^3a_j\sigma(w_j\cdot x+b_0)
$$
such that for any $u_1,u_2\in [-1,1]$
$$
\left|u_1u_2-\left(2h_3\left(\frac{u_1+u_2}{2}\right)-\frac{1}{2}h_3(u_1)-\frac{1}{2}h_3(u_2)\right)\right|<\varepsilon.
$$

\end{corollary}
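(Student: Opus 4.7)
The plan is to adapt the proof of Proposition \ref{3.3} by shifting the Taylor-expansion point from $0$ to $b_0$. First, I would reduce the product-gate claim to univariate polynomial approximation via the polarization identity
\begin{equation*}
u_1 u_2 = 2\Bigl(\tfrac{u_1+u_2}{2}\Bigr)^2 - \tfrac{1}{2}u_1^2 - \tfrac{1}{2}u_2^2,
\end{equation*}
which holds for all real $u_1,u_2$. Thus it suffices to construct $h_3(t)=\sum_{j=1}^3 a_j\sigma(w_j t+b_0)$ satisfying $|t^2-h_3(t)|\leq \varepsilon/3$ for every $t\in[-1,1]$: evaluating the combination $2h_3((u_1+u_2)/2)-\tfrac{1}{2} h_3(u_1)-\tfrac{1}{2} h_3(u_2)$ and applying the triangle inequality then yields the desired bound, since $(u_1+u_2)/2\in[-1,1]$ whenever $u_1,u_2\in[-1,1]$.

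Next, I would mimic the derivation that underlies Proposition \ref{3.1} and Corollary \ref{corollary3.2}, expanding $\sigma$ around $b_0$ rather than around $0$:
\begin{equation*}
\sigma(\delta t + b_0)=\sigma(b_0)+\sigma'(b_0)\delta t+\frac{\sigma''(b_0)}{2}(\delta t)^2+R_2(\delta t),
\end{equation*}
with remainder $|R_2(\delta t)|\leq M|\delta t|^3$ for a constant $M$ controlling $\sigma'''$ near $b_0$. Picking three distinct parameters $\delta_1,\delta_2,\delta_3$ and writing $h_3(t)=\sum_{j=1}^3 a_j\sigma(\delta_j t+b_0)$, the requirement that $h_3$ reproduce $t^2$ up to a $\delta^3$-term becomes a $3\times 3$ linear system in $(a_1,a_2,a_3)$: the coefficients of $t^0$ and $t^1$ produced by $h_3$ must vanish, and that of $t^2$ must equal one. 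This system factors as $\mbox{diag}(\sigma(b_0),\sigma'(b_0),\tfrac{1}{2}\sigma''(b_0))\cdot V(\delta_1,\delta_2,\delta_3)$, where $V$ is a Vandermonde matrix.

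The key step, and the only place the weakened hypothesis enters, is showing this system is still solvable. The Vandermonde factor is invertible since the $\delta_j$'s are distinct; the hypothesis $\sigma'(b_0)\neq 0$, $\sigma''(b_0)\neq 0$ makes the second and third rows of the diagonal factor nondegenerate, so one can extract $t^2$ with a finite coefficient proportional to $1/\sigma''(b_0)$. The first row $\sigma(b_0)$ may vanish, but in that case the $t^0$-equation is automatically $0=0$ and the remaining $2\times 3$ system still admits a solution. The condition $\sigma'''(b_0)\neq 0$, together with Assumption 2, supplies a uniform bound on $M$ needed to control $R_2$. Choosing the $\delta_j$'s to scale as $\mbox{poly}(\varepsilon)$ (analogous to $\delta_m$ in Proposition \ref{3.2}) then pushes $\sum_j|a_j|\,|R_2(\delta_j t)|$ below $\varepsilon/3$ while keeping all network parameters bounded by $\mbox{poly}(1/\varepsilon)$.

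The main obstacle I anticipate is purely bookkeeping: tracking how the constants $|a_j|\sim 1/(\delta_j^2\sigma''(b_0))$ interact with $|R_2|\sim \delta_j^3$ to verify that the final error is indeed below $\varepsilon$ with parameters of polynomial size, and separately handling the degenerate case $\sigma(b_0)=0$ so that the chosen solution still has controllable magnitude. Once these are in place, the proof mirrors Proposition \ref{3.3} essentially verbatim with $b_0$ replacing $0$ in every Taylor expansion.
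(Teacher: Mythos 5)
Your proposal is correct and follows essentially the same route as the paper, which defers the product-gate to Chui et al.\ and obtains it from the polarization identity together with a Taylor-based three-neuron approximation of $t^2$ exactly as you describe; solving the three coefficient equations simultaneously via a Vandermonde system, rather than peeling off one degree at a time as in Proposition~\ref{3.1}, is only a cosmetic difference, and your treatment of the degenerate case $\sigma(b_0)=0$ is sound. One small correction: the remainder estimate $|R_2(\delta t)|\leq M|\delta t|^3$ requires local boundedness of $\sigma'''$ near $b_0$ (implicit in the paper's standing smoothness assumptions on $\sigma$), not the nonvanishing condition $\sigma'''(b_0)\neq 0$, which in fact plays no role in the argument.
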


In our proof, we also need the following Lemma \ref{a2}, which can be found in \cite{Han2019}.

\begin{lemma}\label{a2}
Let $x\in I^d$, $r=k+v$ with $k\in \mathbb{N}_0$ and $0<v\leq 1$. If $f\in Lip(r,c_0)$
and $P_{k,x_0,f}(x)$ is the Taylor polynomial of $f$ with
degree $k$  around $x_0$, then
\begin{equation}\label{lemma 3.5}
|f(x)-P_{k,x_0,f}(x)|\leq \widetilde{c}_1\|x-x_0\|^r, ~x_0\in \mathbb{R}^d,
\end{equation}
where $\widetilde{c}_1$ is a constant depending only on $k,c_0$ and $d$.

\end{lemma}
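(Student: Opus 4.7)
The plan is to reduce the multivariate estimate to the one-variable Taylor formula already recorded as Lemma \ref{lemma3.2}, applied to the univariate slice $g(t):=f(x_0+t(x-x_0))$ on $[0,1]$. First I would use the chain rule to show that $g$ is $k$-times differentiable and
$$g^{(j)}(t)=\sum_{|\alpha|=j}\frac{j!}{\alpha!}D^{\alpha}f\bigl(x_0+t(x-x_0)\bigr)(x-x_0)^{\alpha},\qquad 0\le j\le k,$$
which identifies the univariate Taylor polynomial of $g$ at $0$ evaluated at $1$ with the multivariate Taylor polynomial $P_{k,x_0,f}(x)$. Then, applying Lemma \ref{lemma3.2} (integral remainder) with $\varphi=g$, $u=1$, $u_0=0$ yields
$$f(x)-P_{k,x_0,f}(x)=\frac{1}{(k-1)!}\int_{0}^{1}(1-t)^{k-1}\bigl[g^{(k)}(t)-g^{(k)}(0)\bigr]\,dt.$$

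Next I would bound the integrand using Assumption 1 (the $r$-Lipschitz hypothesis), which controls exactly the $k$-th order partials: for each $|\alpha|=k$,
$$\bigl|D^{\alpha}f(x_0+t(x-x_0))-D^{\alpha}f(x_0)\bigr|\le c_0\,\|t(x-x_0)\|^{v}=c_0 t^{v}\|x-x_0\|^{v},$$
while $|(x-x_0)^{\alpha}|\le\|x-x_0\|^{k}$. Summing over $|\alpha|=k$ (the multinomial identity $\sum_{|\alpha|=k}k!/\alpha!=d^{k}$ absorbs the combinatorics) gives $|g^{(k)}(t)-g^{(k)}(0)|\le c_0 d^{k}t^{v}\|x-x_0\|^{r}$. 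Plugging this into the remainder and evaluating the beta integral
$$\int_{0}^{1}(1-t)^{k-1}t^{v}\,dt=\frac{\Gamma(v+1)\Gamma(k)}{\Gamma(k+v+1)}<\infty$$
produces the bound $\tilde{c}_1\|x-x_0\|^{r}$ with $\tilde{c}_1=\tilde{c}_1(k,c_0,d)$ collecting $c_0$, $d^{k}/(k-1)!$ and the beta factor, which is the asserted constant.

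Finally, the degenerate case $k=0$ must be handled separately since Lemma \ref{lemma3.2} is stated for $k\ge 1$; here $P_{0,x_0,f}(x)=f(x_0)$ and the claim is just the Lipschitz/Hölder estimate (\ref{lip}) with $\tilde{c}_1=c_0$. I do not anticipate any analytic obstacle: the only mildly delicate step is the chain-rule identification of $g^{(k)}$ with the multivariate partials (for which the required continuity is provided by Assumption 1), and the bookkeeping of constants so that they package cleanly into a single $\tilde{c}_1$ depending only on $k$, $c_0$, and $d$ as claimed.
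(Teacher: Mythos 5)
Your proof is correct, but there is nothing in the paper to compare it against: the authors do not prove Lemma \ref{a2} at all, they simply cite Han et al.\ \cite{Han2019} for it. What you supply is the standard self-contained derivation — restrict $f$ to the segment via $g(t)=f(x_0+t(x-x_0))$, identify the degree-$k$ univariate Taylor polynomial of $g$ at $0$ with $P_{k,x_0,f}(x)$ through the chain-rule formula $g^{(j)}(t)=\sum_{|\alpha|=j}\frac{j!}{\alpha!}D^{\alpha}f(x_0+t(x-x_0))(x-x_0)^{\alpha}$, and then estimate the integral remainder of Lemma \ref{lemma3.2} using the H\"older condition (\ref{smooth}) on the $k$-th order partials. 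This is almost certainly the argument behind the cited reference, and reusing the paper's own Lemma \ref{lemma3.2} is an economical choice; you also correctly isolate the $k=0$ case, where the claim degenerates to (\ref{lip}), since Lemma \ref{lemma3.2} requires $k\geq 1$. All the estimates check out: $|(x-x_0)^{\alpha}|\leq\|x-x_0\|^{k}$ for $|\alpha|=k$, $\sum_{|\alpha|=k}k!/\alpha!=d^{k}$, and the Beta integral is finite. The only cosmetic point is that your constant formally carries the factor $\Gamma(v+1)\Gamma(k)/\Gamma(k+v+1)$, which depends on $v$; since $t^{v}\leq 1$ on $[0,1]$ this factor is at most $1/k$, so the constant can indeed be taken to depend only on $k$, $c_0$ and $d$ exactly as the lemma asserts. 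One further wrinkle worth noting (a defect of the statement, not of your proof) is that the lemma allows $x_0\in\mathbb{R}^d$ while $f$ is only assumed smooth on $I^d$; your segment argument needs the segment $[x_0,x]$ to lie in the domain of $f$, which holds in all the paper's applications because there $x_0=\eta_j\in A_{n,j}\subset I^d$ and $I^d$ is convex.
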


The following Lemma {\ref{3.5}} illustrates the approximation property of the
 product of Taylor polynomial and deep nets.

\begin{lemma}\label{3.5}
If $g\in \mbox{Lip}(r,c_0)$ with $r=k+v, k\in \mathbb{N}_0$, $0<v\leq1, c_0>0$,
$\sigma$ is a non-decreasing sigmoidal function and
$\Phi(x)$ is defined by (\ref{fai}), then
$$
|g(x)-\Phi_g(x)|\leq \tilde{c_1} n^{-r}+n^dB_0\varepsilon ,~x\in X,
$$
where $B_0:=\|g\|_{L_{\infty}(X)}+\tilde{c_1}$.
\end{lemma}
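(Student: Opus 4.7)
The plan is to exploit the localization provided by Lemma \ref{3.4}: among the $n^d$ terms appearing in $\Phi_g(x)$, only the one associated with the cube $A_{n,j_x}$ containing $x$ behaves like $1$, while all other terms are damped by a factor smaller than $\varepsilon$. Thus the error naturally splits into three pieces: a Taylor-approximation piece on the ``correct'' cube, a localization-leakage piece on the correct cube, and a localization-leakage piece summed over all ``wrong'' cubes.

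First I would fix $x \in X$, identify the unique cube $A_{n,j_x}$ containing $x$ (using the tie-breaking convention introduced before (\ref{N*})), and write the decomposition
\begin{equation*}
g(x)-\Phi_g(x)=\bigl[g(x)-P_{k,\eta_{j_x},g}(x)\bigr]+P_{k,\eta_{j_x},g}(x)\bigl[1-N^*_{n,j_x,K_\varepsilon}(x)\bigr]-\sum_{j\neq j_x}P_{k,\eta_j,g}(x)\,N^*_{n,j,K_\varepsilon}(x).
\end{equation*}
The first bracket is a pure Taylor remainder on a cube of side $1/n$: since $\eta_{j_x},x\in A_{n,j_x}$ we have $\|x-\eta_{j_x}\|\le \sqrt{d}/n$, so Lemma \ref{a2} bounds it by $\tilde{c}_1 n^{-r}$ (absorbing $d^{r/2}$ into $\tilde c_1$, which is already permitted by the statement of Lemma \ref{a2}). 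For the second and third pieces I would invoke Lemma \ref{3.4}(ii) and (i) respectively, which give $|1-N^*_{n,j_x,K_\varepsilon}(x)|\le\varepsilon$ and $|N^*_{n,j,K_\varepsilon}(x)|<\varepsilon$ for every $j\neq j_x$.

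The remaining ingredient is a uniform upper bound on $|P_{k,\eta_j,g}(x)|$ that is valid for \emph{every} index $j$, not only for $j=j_x$. This is the one step that needs a little care and is the main obstacle: a priori the Taylor polynomial can grow away from its center, so a naive derivative-based bound would produce a factor depending on $\|x-\eta_j\|$. I would sidestep this by applying Lemma \ref{a2} again, this time globally on $X$, via
\begin{equation*}
|P_{k,\eta_j,g}(x)|\le |g(x)|+|g(x)-P_{k,\eta_j,g}(x)|\le \|g\|_\infty+\tilde{c}_1\|x-\eta_j\|^r\le \|g\|_\infty+\tilde{c}_1=B_0,
\end{equation*}
using $\|x-\eta_j\|\le \mathrm{diam}(X)\le \sqrt{d}$ and once more absorbing $d^{r/2}$ into $\tilde c_1$.

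Finally I would assemble the three bounds: the Taylor remainder contributes $\tilde c_1 n^{-r}$, the central leakage contributes $B_0\varepsilon$, and the sum over the $n^d-1$ off-cube indices contributes $(n^d-1)B_0\varepsilon$. Adding these gives the claimed estimate $|g(x)-\Phi_g(x)|\le \tilde c_1 n^{-r}+n^dB_0\varepsilon$, valid for every $x\in X$. No further estimates on $\sigma$ or on $N^*_{n,j,K_\varepsilon}$ are needed beyond the two inequalities already proved in Lemma \ref{3.4}, which is why the argument is essentially a bookkeeping exercise once the correct decomposition and the uniform polynomial bound have been set up.
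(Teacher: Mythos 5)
Your proposal is correct and follows essentially the same route as the paper: the identical three-term decomposition into Taylor remainder, on-cube leakage, and off-cube leakage, with Lemma \ref{a2} supplying the uniform bound $|P_{k,\eta_j,g}(x)|\le B_0$ and Lemma \ref{3.4} supplying the two localization inequalities. Your explicit handling of the $d^{r/2}$ factors is in fact slightly more careful than the paper, which silently bounds $\|x-\eta_{j_x}\|^r$ by $n^{-r}$.
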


\begin{proof}
From Lemma {\ref{a2}}, we observe
\begin{equation}\label{bound}
|P_{k,\eta_{j_x},g}(x)| \leq \|g\|_{L_{\infty}(X)}+\tilde{c_1}\|x-\eta_{j_x}\|^r \leq \|g\|_{L_{\infty}(X)}+\tilde{c_1}:=B_0
\end{equation}
where $B_0:=\|g\|_{L_{\infty}(X)}+\tilde{c_1}$.

Since $I^d=\bigcup_{j\in {\mathbb{N}_n^d}}A_{n,j}$,  for each $x\in X$,
there exists a $j_x$ such that $x\in A_{n,j_x}$.
Therefore, it follows from Proposition \ref{3.4} that
\begin{eqnarray}
&&|g(x)-\Phi_g(x)|  \nonumber \\
&=&\left|g(x)-P_{k,\eta_{j_x},g}(x)-\sum_{j\neq j_x}P_{k,\eta_{j},g}(x)N^*_{n,j,K_\varepsilon}(x)
+P_{k,\eta_{j_x},g}(x)(1-N^*_{n,j_x,K_\varepsilon}(x))\right|  \nonumber \\
&\leq& |g(x)-P_{k,\eta_{j_x},g}(x)|+\left|\sum_{j\neq j_x}P_{k,\eta_{j},g}(x)N^*_{n,j,K_\varepsilon}(x)]|
+|P_{k,\eta_{j_x},g}(x)(1-N^*_{n,j_x,K_\varepsilon}(x))\right|   \nonumber \\
&\leq& \tilde{c_1}\|x-\eta_{j_x}\|^r+(n^d-1)B_0\varepsilon+B_0\varepsilon  \nonumber \\
&\leq& \tilde{c_1} n^{-r}+n^dB_0\varepsilon
\end{eqnarray}
This completes the proof of Lemma \ref{3.5}.
\end{proof}

\begin{proof}[Proof of Theorem \ref{theorem1}]
The proof can be divided into three steps:
the first one is to give estimates for the product function and shallow net;
then, we consider the approximation between Taylor polynomial and shallow net;
finally, we give approximation errors by combining the above two steps.

Step 1: By the definition of $N^*_{n,j,K_\varepsilon}(x)$ in (\ref{N*}), we observe
$$
|N^*_{n,j,K_\varepsilon}(x)|\leq 1.
$$
Furthermore, it follows from Lemma {\ref{a2}} that
\begin{equation}\label{bound1}
|P_{k,\eta_j,g}(x)|\leq \|g\|_{L_{\infty}(X)}+\tilde{c_1}\|x-\eta_j\|^r
\leq \|g\|_{L_{\infty}(X)}+\tilde{c_1}.
\end{equation}
Denote
$B_1:=4(\|g\|_{L_{\infty}(X)}+\tilde{c_1}+1)$.
Hence, for an arbitrary $x\in X$, we have $\frac{N^*_{n,j,K_\varepsilon}(x)}{B_1} \in [-\frac{1}{4}, \frac{1}{4}]$
and $\frac{P_{k,\eta_j,g}(x)}{B_1} \in [-\frac{1}{4}, \frac{1}{4}]$.
It then follows from Corollary {\ref{corollary3.3}} with $u_1=\frac{N^*_{n,j,K_\varepsilon}(x)}{B_1}$,
$u_2=\frac{P_{k,\eta_j,g}(x)}{B_1}$ that there exists a shallow net
$$
h_3(x)=\sum_{j=1}^3a_j\sigma(w_j\cdot x+b_j)
$$
such that
\begin{eqnarray}\label{bound3-2}
&&\left|P_{k,\eta_j,g}(x)N^*_{n,j,K_\varepsilon}(x)-B_1^2\left(2h_3\left(\frac{N^*_{n,j,K_\varepsilon}(x)+P_{k,\eta_j,g}(x)}{2B_1}\right)
-\frac{h_3\left(\frac{N^*_{n,j,K_\varepsilon}(x)}{B_1}\right)}{2}-\frac{h_3\left(\frac{P_{k,\eta_j,g}(x)}{B_1}\right)}{2}\right)\right| \nonumber \\
&\leq& B_1^2\varepsilon.
\end{eqnarray}
For the sake of convenience, denote
 $$
 \Delta_1=B_1^2\left(2h_3\left(\frac{N^*_{n,j,K_\varepsilon}(x)+P_{k,\eta_j,g}(x)}{2B_1}\right)-
\frac{h_3\left(\frac{N^*_{n,j,K_\varepsilon}(x)}{B_1}\right)}{2}-\frac{h_3\left(\frac{P_{k,\eta_j,g}(x)}{B_1}\right)}{2}\right).
$$
Noting $\|\sigma^{'}\|_{L_{\infty}(R)}\leq1$, for any $x_1, x_2 \in X$, there holds
\begin{eqnarray}\label{H}
|h_3(x_1)-h_3(x_2)|&\leq& \left|\sum_{j=1}^3a_j\sigma(w_j\cdot x_1+b_j)-\sum_{j=1}^3a_j\sigma(w_j\cdot x_2+b_j)\right|  \nonumber \\
&\leq& \sum_{j=1}^3|a_j||\sigma(w_j\cdot x_1+b_j)-\sigma(w_j\cdot x_2+b_j)|  \nonumber \\
&\leq& \sum_{i=1}^3|a_j|\|w_j\|\|x_1-x_2\| \nonumber \\
&\leq& 3\tilde{c}_2\|x_1-x_2\|,
\end{eqnarray}
where  $\tilde{c}_2=\max_{j\in \{1,2,3\}}|a_j|\max_{i\in \{1,\cdots,d\}}\{|w_{j1}|,\cdots,|w_{jd}|\}$
and $w_j=(w_{j1}$, $\cdots,w_{jd})^T$.

Step 2: It from Corollary {\ref{corollary3.2}} with $P_k(t-\eta_j)=\frac{P_{k,\eta_j,g}(x)}{B_1}$ that
there exists a shallow net
\begin{equation}\label{*}
h_{k+1,L}(x)=\sum_{j=1}^{k+1}\sum_{i=0}^La(i,j)\sigma(w_j\cdot(x-x_i)+b_j)
\end{equation}
such that
$$
\left|\frac{P_{k,\eta_j,g}(x)}{B_1}-h_{k+1,L}(x)\right|\leq \varepsilon_1.
$$

Step 3: Define
\begin{equation}\label{HK}
H(x):=\sum_{j=1}^{n^d}H_j(x)
\end{equation}
where
\begin{equation}\label{***}
H_j(x)=B_1^2\left(2h_3\left(\frac{h_{k+1,L}(x)}{2}+\frac{N^*_{n,j,K_\varepsilon}(x)}{2B_1}\right)-
\frac{h_3(h_{k+1,L}(x))}{2}-\frac{h_3\left(\frac{N^*_{n,j,K_\varepsilon}(x)}{B_1}\right)}{2}\right).
\end{equation}
By (\ref{fai}) and (\ref{bound3-2}), we get
\begin{eqnarray}\label{H-fai-1}
&&|H(x)-\Phi_g(x)|\nonumber \\
&\leq& \sum_{j\in \mathbb{N}_n^d}|H_j(x)-\Delta_1+\Delta_1-P_{k,\eta_j,g}(t)N^*_{n,j,K_\varepsilon}(x)| \nonumber \\
&\leq& \sum_{j\in \mathbb{N}_n^d}\left|B_1^2\left(2h_3\left(\frac{h_{k+1,L}(x)}{2}+\frac{N^*_{n,j,K_\varepsilon}(x)}{2B_1}\right)-
\frac{h_3(h_{k+1,L}(x))}{2}-\frac{h_3\left(\frac{N^*_{n,j,K_\varepsilon}(x)}{B_1}\right)}{2}\right)\right.\nonumber \\
&&\left.-B_1^2\left(2h_3\left(\frac{N^*_{n,j,K_\varepsilon}(x)+P_{k,\eta_j,g}(x)}{2B_1}\right)-
\frac{h_3\left(\frac{N^*_{n,j,K_\varepsilon}(x)}{B_1}\right)}{2}-\frac{h_3\left(\frac{P_{k,\eta_j,g}(x)}{B_1}\right)}{2}\right)\right|
+n^dB_1^2\varepsilon\nonumber \\
&\leq& n^dB_1^2\left(\frac{9}{2}\tilde{c_2}\varepsilon_1+\varepsilon\right) \nonumber \\
&\leq& \bar{C_2}n^d\varepsilon
\end{eqnarray}
where we set $\varepsilon_1=2\varepsilon$ and $\bar{C_2}$ is a constant depending only on $B_1$ and $\tilde{c_2}$.
Noting (\ref{H-fai-1}) and Lemma {\ref{3.5}}, we obtain
\begin{eqnarray*}
|g(x)-H(x)|&\leq& |g(x)-\Phi_g(x)|+|\Phi_g(x)-H(x)| \\
&\leq& \tilde{c_1}n^{-r}+B_0n^d\varepsilon+\bar{C_2}n^d\varepsilon \\
&=& C(n^{-r}+n^d\varepsilon),
\end{eqnarray*}
where $C$ is a constant depending only on $k,c_0,d,B_0$.
Due to (\ref{N*}) (\ref{*}) (\ref{HK}) and (\ref{***}),
there exists a deep net $H(x)\in \mathcal{H}_{3,\tilde{n}}$
with $\tilde{n}=6n^d((L+1)(k+1)+2d+1)$ free parameters satisfying
$$
|g(x)-H(x)|\leq C(\tilde{n}^{-\frac{r}{d}}+\tilde{n}\varepsilon).
$$
Furthermore, it is easy to check (see \cite{Chui2019} for detailed proof)
that all the parameters in $H(x)$ can be bounded by
$\mbox{poly}(\tilde{n},\frac{1}{\varepsilon})$.
This completes the proof of  Theorem \ref{theorem1}.
\end{proof}

\begin{proof}[Proof of Corollary \ref{corollary-1}]
This result can be directly deduced from Theorem \ref{theorem1} with $\varepsilon=\tilde{n}^{-\frac{r+d}{d}}$.
This completes the proof of Corollary \ref{corollary-1}.
\end{proof}

\subsection{Proof of  Theorem \ref{theorem2}}

Since the spatial sparseness depends heavily on the localized approximation property,
we first show that $\Phi_f(x)$ succeeds to realizing the sparseness of the target function $f$
that breaks through the bottleneck of shallow nets \cite{Chui1994}. For different partitions
$\{A_{n,j}\}_{j\in N_n^d}$ and $\{B_{N,j}\}_{j\in N_N^d}$, we
assume $n\geq \hat{c}N$ for some $\hat{c}>0$ throughout the proof.

\begin{lemma}\label{4.1}
Let $0<\varepsilon<1$, under Assumptions 2 and 3, if $\Phi_f(x)$ is defined by (\ref{fai})
and $K_\varepsilon$ satisfies (\ref{k}), then
\begin{equation}\label{proposition4.1-1}
|f(x)-\Phi_f(x)|\leq c_0n^{-r}+B_2 n^d\varepsilon,
\end{equation}
where $B_2=\|f\|_{L_\infty (X)}+\tilde{c}_1$.
Furthermore, if $n\geq \hat{c}N$, there holds
\begin{equation}\label{proposition4.1-2}
|\Phi_f(x)|\leq B_2n^d\varepsilon, ~\forall x\in I^d \setminus \bigcup_{k\in \Lambda_s} \bar{\Lambda}_k.
\end{equation}
\end{lemma}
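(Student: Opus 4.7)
The plan is to run the argument of Lemma~\ref{3.5} essentially verbatim for part~(i) and then exploit the sparsity hypothesis to kill the ``diagonal'' Taylor term for part~(ii).

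For part~(i), fix $x\in X$ and choose $j_x\in \mathbb{N}_n^d$ with $x\in A_{n,j_x}$ (using the same tie-breaking as in Section~4.1). Split
\begin{equation*}
f(x)-\Phi_f(x)=\bigl[f(x)-P_{k,\eta_{j_x},f}(x)\bigr]+P_{k,\eta_{j_x},f}(x)\bigl[1-N^*_{n,j_x,K_\varepsilon}(x)\bigr]-\sum_{j\ne j_x}P_{k,\eta_{j},f}(x)N^*_{n,j,K_\varepsilon}(x).
\end{equation*}
Lemma~\ref{a2} bounds the first bracket by $\tilde c_1\|x-\eta_{j_x}\|^r\le C n^{-r}$ since $\|x-\eta_{j_x}\|\le \sqrt{d}/(2n)$. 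For the other two pieces, use the uniform bound $|P_{k,\eta_j,f}(x)|\le \|f\|_{L_\infty(X)}+\tilde c_1=B_2$ (established as in (\ref{bound1})) together with Lemma~\ref{3.4}(ii) for the $j_x$-term and Lemma~\ref{3.4}(i) for the $n^d-1$ off-diagonal terms, obtaining $B_2\varepsilon+(n^d-1)B_2\varepsilon\le B_2 n^d\varepsilon$. Adding the two contributions yields (\ref{proposition4.1-1}).

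For part~(ii), the crucial new ingredient is that the hypothesis $x\in I^d\setminus \bigcup_{\jmath\in\Lambda_s}\bar{\Lambda}_\jmath$ means the unique cube $A_{n,j_x}$ containing $x$ has index $j_x\notin\bigcup_{\jmath\in\Lambda_s}\bar{\Lambda}_\jmath$; by the definition of $\bar{\Lambda}_\jmath$, this forces $A_{n,j_x}\cap B_{N,\jmath}=\emptyset$ for every $\jmath\in\Lambda_s$, hence $A_{n,j_x}\cap S=\emptyset$. Choosing $\eta_{j_x}$ in the interior of $A_{n,j_x}$, the closed set $S$ is bounded away from $\eta_{j_x}$, so $f$ vanishes on an open neighbourhood of $\eta_{j_x}$. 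Consequently every partial derivative of $f$ up to order $k$ vanishes at $\eta_{j_x}$, so $P_{k,\eta_{j_x},f}\equiv 0$, and the $j=j_x$ term disappears from $\Phi_f(x)$. For the remaining indices $j\ne j_x$, we have $x\notin A_{n,j}$, so Lemma~\ref{3.4}(i) gives $|N^*_{n,j,K_\varepsilon}(x)|<\varepsilon$, and together with $|P_{k,\eta_j,f}(x)|\le B_2$ this bounds the whole sum by $(n^d-1)B_2\varepsilon\le B_2n^d\varepsilon$, which is (\ref{proposition4.1-2}). The role of the assumption $n\ge \hat c N$ is to make each $A_{n,j}$ small enough that the dichotomy ``either $A_{n,j}\subset \mathrm{int}(S^c)$ or $j\in\bigcup_{\jmath\in\Lambda_s}\bar{\Lambda}_\jmath$'' is meaningful.

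The only delicate point is the vanishing of $P_{k,\eta_{j_x},f}$; once the support property of $f$ is translated, via the index-set language of $\bar{\Lambda}_\jmath$, into $f\equiv 0$ on an open neighbourhood of the Taylor expansion point, the classical fact that all derivatives of $f$ at an interior zero of its support vanish finishes the argument. Everything else reduces to the same bookkeeping used for Lemma~\ref{3.5}.
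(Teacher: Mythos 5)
Your proof is correct and follows essentially the same route as the paper: the same three-term decomposition plus Lemma \ref{3.4} and the bound $|P_{k,\eta_j,f}(x)|\leq B_2$ for part (i), and the observation that $A_{n,j_x}\cap S=\emptyset$ forces $f$ and its Taylor polynomial at $\eta_{j_x}$ to vanish for part (ii). You merely spell out in more detail than the paper why $P_{k,\eta_{j_x},f}\equiv 0$ (all derivatives vanish on an open set where $f\equiv 0$), which is a welcome clarification rather than a deviation.
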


\begin{proof}
Since $I^d=\bigcup_{j\in N_n^d}A_{n,j}$, for each $x\in X$, there exists a $j_x\in \mathbb{N}$ such that $x\in A_{n,j_x}$.
By Lemma \ref{3.4}, we know that for any $x\in A_{n,j}$, $|1-N_{n,j,K_\varepsilon}^*(x)|\leq\varepsilon$ and
for any $x\not\in A_{n,j}$, $|N_{n,j,K_\varepsilon}^*(x)|\leq\varepsilon$.
From (\ref{bound1}), we also get
\begin{equation}\label{pf}
|P_{k,\eta_j,f}(x)|\leq B_2,
\end{equation}
where $B_2=\|f\|_{L_\infty (X)}+\tilde{c}_1$. Then
\begin{eqnarray*}\label{sparse1}
&&|f(x)-\Phi_f(x)|\nonumber\\
&=&|f(x)-P_{k,\eta_{j_x},f}(x)-\sum_{j\neq j_x}P_{k,\eta_{j},f}(x)N_{n,j,K_\varepsilon}^*(x)+P_{k,\eta_{j_x},f}(x)(1-N_{n,j_x,K_\varepsilon}^*(x))| \\
&\leq&|f(x)-P_{k,\eta_{j_x},f}(x)|+|\sum_{j\neq j_x}P_{k,\eta_{j},f}(x)N_{n,j,K_\varepsilon}^*(x)|+|P_{k,\eta_{j_x},f}(x)(1-N_{n,j_x,K_\varepsilon}^*(x))| \\
&\leq& c_0\|x-\eta_{j_x}\|^r+(n^d-1)B_2\varepsilon+B_2\varepsilon \nonumber\\
&\leq& c_0 n^{-r}+B_2n^d\varepsilon.
\end{eqnarray*}

Since $n>\hat{c}N$, $x\in I^d \setminus \bigcup_{k\in \Lambda_s} \bar{\Lambda}_k$ implies $A_{n,j_x}\bigcap S=\emptyset$.
This together with $f\in Lip(N,s,r,c_0)$ yields $f(x)=P_{k,\eta_{j_x},f}(x)=0$. From Lemma \ref{3.4} and (\ref{pf}), we have
\begin{eqnarray*}\label{sparse2}
|\Phi_f(x)|&=&|\sum_{j\neq j_x}P_{k,\eta_j,f}(x)N_{n,j,K(\varepsilon)}^*(x)|+
|P_{k,\eta_{j_x},f}(x)N_{n,j_x,K(\varepsilon)}^*(x)| \\
&\leq&|P_{k,\eta_{j},f}(x)|\sum_{j\neq j_x}|N_{n,j,K(\varepsilon)}^*(x)|\\
&\leq& (n^d-1)B_2\varepsilon\\
&\leq& B_2 n^d\varepsilon.
\end{eqnarray*}
This completes the proof of Lemma \ref{4.1}.
\end{proof}

\begin{proof}[Proof of Theorem \ref{theorem2}]
The proof of this theorem is similar to the proof of Theorem \ref{theorem1}.
Similar as Step 1 and Step 2 in the proof of Theorem \ref{theorem1}, we obtain
\begin{eqnarray}\label{bound2}
&&\left|P_{k,\eta_j,f}(x)N^*_{n,j,K_\varepsilon}(x)-B_3^2\left(2h_3\left(\frac{2N^*_{n,j,K_\varepsilon}(x)+P_{k,\eta_j,f}(x)}{2B_3}\right)-
\frac{h_3\left(\frac{N^*_{n,j,K_\varepsilon}(x)}{B_3}\right)}{2}-\frac{h_3\left(\frac{P_{k,\eta_j,f}(x)}{B_3}\right)}{2}\right)\right| \nonumber \\
&\leq& B_3^2\varepsilon,
\end{eqnarray}
where $B_3:=2(\|f\|_{L_\infty(X)}+\tilde{c}_1+1)$.
Denote
 $$
 \Delta_2=B_3^2\left(2h_3\left(\frac{2N^*_{n,j,K_\varepsilon}(x)+P_{k,\eta_j,f}(x)}{2B_3}\right)-
\frac{h_3\left(\frac{N^*_{n,j,K_\varepsilon}(x)}{B_3}\right)}{2}-\frac{h_3\left(\frac{P_{k,\eta_j,f}(x)}{B_3}\right)}{2}\right).
$$
Define
$$
H(x):=\sum_{j=1}^{n^d}H_j(x)
$$
where $H_j(x)=B_3^2\left(2h_3\left(\frac{2N^*_{n,j,K_\varepsilon}(x)+h_{k+1,L}(x)}{2B_3}\right)-
\frac{h_3\left(\frac{N^*_{n,j,K_\varepsilon}(x)}{B_3}\right)}{2}-\frac{h_3\left(\frac{h_{k+1,L}(x)}{B_3}\right)}{2}\right)$.\\
Proposition {\ref{3.2}} implies that there exists a shallow net
\begin{equation*}
h_{k+1,L}(x)=\sum_{j=1}^{k+1}\sum_{i=0}^La(i,j)\sigma(w_j\cdot(x-x_i)+b_j)
\end{equation*}
such that
\begin{equation}\label{**}
\left|\frac{P_{k,\eta_j,f}(x)}{B_3}-h_{k+1,L}(x)\right|\leq \varepsilon_1.
\end{equation}
Since $f\in Lip(N,s,r,c_0)$ with (\ref{**}), we obtain
\begin{eqnarray}\label{H-fai}
&&|H(x)-\Phi_f(x)|\nonumber \\
&\leq& \sum_{j\in \mathbb{N}_n^d}|H_j(x)-\Delta_2+\Delta_2-P_{k,\eta_j,f}(x)N^*_{n,j,K_\varepsilon}(x)| \nonumber \\
&\leq& \sum_{j\in \mathbb{N}_n^d}\left|B_3^2\left(2h_3\left(\frac{h_{k+1,L}(x)}{2}+\frac{N^*_{n,j,K_\varepsilon}(x)}{2B_3}\right)-
\frac{h_3(h_{k+1,L}(x))}{2}-\frac{h_3\left(\frac{N^*_{n,j,K_\varepsilon}(x)}{B_3}\right)}{2}\right)\right.\nonumber \\
&&\left.-B_3^2\left(2h_3\left(\frac{N^*_{n,j,K_\varepsilon}(x)+P_{k,\eta_j,f}(x)}{2B_3}\right)-
\frac{h_3\left(\frac{{N^*_{n,j,K_\varepsilon}(x)}}{B_3}\right)}{2}-\frac{h_3\left(\frac{P_{k,\eta_j,f}(x)}{B_3}\right)}{2}\right)\right|
+B_3^2 n^d\varepsilon\nonumber \\
&\leq& \frac{9\tilde{c_2}B_3^2n^d}{2}\left|h_{k+1,L}(x)-\frac{P_{k,\eta_j,f}(x)}{B_3}\right|+B_3^2n^d\varepsilon\nonumber \\
&\leq& \frac{9\tilde{c_2}B_3^2n^d}{2}\varepsilon_1+B_3^2n^d\varepsilon\nonumber \\
&=& B_3^2n^d\varepsilon(9\tilde{c_2}+1)\nonumber \\
&=&\tilde{c}_3n^d\varepsilon
\end{eqnarray}
where $\varepsilon_1=2\varepsilon$, $\tilde{c}_3$ is a constant depending only on $B_3$ and $\tilde{c}_2$.\\
Due to (\ref{H-fai}) and Lemma {\ref{4.1}}, for any $x\in X$, we get
\begin{eqnarray*}
|f(x)-H(x)|&\leq& |f(x)-\Phi_f(x)|+|\Phi_f(x)-H(x)| \\
&\leq& c_0n^{-r}+B_2n^d\varepsilon+\tilde{c}_3n^d\varepsilon \\
&=& c_0n^{-r}+\tilde{C}n^d\varepsilon,
\end{eqnarray*}
where $\tilde{C}$ is a constant depending only on $\tilde{c_3},B_2$.\\
Moreover, if $x\in I^d \setminus \bigcup_{\jmath\in \Lambda_s} \bar{\Lambda}_\jmath$,
we have $f(x)=P_{k,\eta_j,f}(x)=0$ and
$|\Phi_f(x)|\leq B_2n^d\varepsilon$, then it is easy to obtain that
\begin{eqnarray*}
|H(x)|&\leq& |\Phi_f(x)|+|\Phi_f(x)-H(x)| \\
&\leq& |\Phi_f(x))|+|\Phi_f(x)-H(x)|\\
&\leq& B_2n^{d}\varepsilon+\tilde{c}_3n^d\varepsilon\\
&=&\tilde{C}n^d\varepsilon,
\end{eqnarray*}
where $\tilde{C}$ is a constant depending only on $k,c_0,d,B_2$.
It is easy to see that there are totally $\tilde{n}:=6n^d((L+1)(k+1)+2d+1)$ free parameters in $H(x)$.
In this case, we obtain
\begin{equation*}
|f(x)-H(x)|\leq c_0\tilde{n}^{-\frac{r}{d}}+\tilde{C}\tilde{n}\varepsilon.
\end{equation*}
Furthermore, if $ x\in X \setminus \bigcup_{\jmath\in \Lambda_s} \bar{\Lambda}_\jmath$ and $\tilde{n}\geq \tilde{c}N^d$, then
\begin{equation*}
|H(x)|\leq \tilde{C}\tilde{n}\varepsilon.
\end{equation*}
It is noticeable that all the parameters of deep nets are controllable,
which is bounded by $\mbox{poly}(\tilde{n},\frac{1}{\varepsilon})$.
This completes the proof  of Theorem \ref{theorem2}.
\end{proof}

\begin{proof}[Proof of Corollary \ref{corollary-2}]
The result (\ref{sparse-1}) can be  deduced directly from Theorem \ref{theorem2}
with $\varepsilon=\tilde{n}^{-T}$ for $T\geq \frac{r+d}{d}$.
This completes the proof of Corollary \ref{corollary-2}.
\end{proof}

\section{References}

\end{document}